\title{Coding theory for noiseless channels realized by anonymous oblivious mobile robots} 
\author{Yukiko Yamauchi\thanks{Corresponding author. Faculty of Information Science and Electrical Engineering, Kyushu University, Japan. E-mail: \texttt{yamauchi@inf.kyushu-u.ac.jp}} \and 
Masafumi Yamashita\thanks{Faculty of Information Science and Electrical Engineering, Kyushu University, Japan. E-mail: \texttt{mak@inf.kyushu-u.ac.jp}}}
\theoremstyle{plain}
\newtheorem{theorem}{Theorem}
\newtheorem{lemma}[theorem]{Lemma}
\newtheorem{corollary}[theorem]{Corollary}
\newtheorem{observation}[theorem]{Observation}
\newenvironment{proof}{{\bf Proof. } }
\newcommand{\Prob}{\mathrm{P}}
\begin{document}
\date{}
\maketitle

\begin{abstract}
We propose an information transmission scheme by a swarm 
of anonymous oblivious mobile robots on a graph. 
The swarm of robots travel from a sender vertex 
to a receiver vertex to transmit a symbol generated at the sender. 
The codeword for a symbol is a pair of an initial configuration 
at the sender and a set of terminal configurations at the receiver. 
The set of such codewords forms a code. 
We analyze the performance of the proposed scheme 
in terms of its code size and transmission delay. 
We first demonstrate that a lower bound of the transmission delay 
depends on the size of the swarm, 
and the code size is upper bounded by an exponent of the size of the swarm. 
We then give two algorithms for a swarm of a fixed size.
The first algorithm realizes a near optimal code size with a large transmission delay.
The second algorithm realizes an optimal transmission delay with a smaller code size. 
We then consider information transmission by swarms of different sizes 
and present upper bounds of the expected swarm size by the two algorithms. 
We also present lower bounds by Shannon's lemma and noiseless coding theorem. 

\noindent{\bf Keyword}: Mobile robots, information transmission, coding theory. 

\end{abstract}

\section{Introduction}

Memory is indispensable to a computer system to demonstrate 
its computation ability.
Its importance does not diminish in a distributed system.
We thus tend to guess that a distributed system lacking memory 
can solve no problems but trivial ones.
A \emph{swarm} of anonymous oblivious mobile robots is a distributed 
system whose components
called robots move on a continuous space or a graph.
It is typically characterized by the lack of identifiers 
and common and local memories.
Contrary to the quess, in spite of the absence of memory,
it has been shown to have rich ability to solve a variety of problems 
(e.g., \cite{BMPT11,CFPS12,CDPIM10,DFSY15,DLPRT12,DFSV18,DP07,DPV10,FIPS13,FPSW08,FYOKY15,LYKY18,PPV15,SY99,YS10,YUKY17,YUKY16,YY14}).
In order to solve those problems, 
the robots indeed need to ``remember'' key information 
to solve a problem 
such as the current search direction in exploration \cite{BMPT11,DLPRT12,DYKY18,FIPS13}, 
the currently agreed common coordinate system in pattern 
formation \cite{FYOKY15,SY99,YS10,YY14,YUKY16,YUKY17}, 
and the current phase in forming a sequence of patterns \cite{DFSY15}.
A main idea to make up the lack of memory is to use ``external memory''
composed of the locations of all robots, 
i.e., its (global) configuration,
and the information that the robots need to remember is 
embedded in the current configuration.
However for the robots to keep external memory stable 
is by no means easy,
and interesting tricks have been developed to realize the idea
based on concepts such as 
the smallest enclosing circle \cite{DPV10,FPSW08,FYOKY15,LYKY18,SY99,YS10,YY14}, 
the rotation group \cite{YUKY17,YUKY16} 
and the Fermat point \cite{CFPS12}.
Proposing a suitable external memory is thus considered 
to be a key in the design of algorithm 
for a swarm of anonymous oblivious mobile robots. 

This paper investigates an information transmission problem on a graph, 
which asks an algorithm to transmit information from a sender 
to a receiver 
by using a swarm of anonymous oblivious mobile robots 
moving on the graph, 
and analyzes the bounds on the amount of information 
that the swarm of robots can carry in its external memory 
and the transmission delay.

Let $G = (V,E)$ be a connected undirected (possibly infinite) graph 
with two distinguished vertices $u_S$ and $u_R$ called the 
\emph{sender} and the \emph{receiver}, respectively.
We want to transmit a symbol $s$ in 
$S = \{ s_1, s_2, \ldots , s_{\alpha} \}$ 
from $u_S$ to $u_R$ by using a swarm of anonymous oblivious 
mobile robots on $G$, 
where the probability that $s = s_i$ is $p_i$.
Sometimes we assume that the distance between $u_S$ and $u_R$ 
is sufficiently larger than $\alpha$ and/or the number $k$ of the robots
to reduce possible disturbance caused by boundary conditions in analyses.

Each vertex of $G$ can accommodate at most one robot of the swarm. 
Thus the \emph{configuration} of the swarm is represented by a subset 
$U$ of $V$.
Any configuration of the swarm must be connected; 
${\cal C} = \{ U \mid G[U] \text{ is connected} \}$ 
denotes the set of all connected configurations,
where $G[U]$ is the subgraph of $G$ induced by $U$.
When the current configuration is $C$,
a robot $r$ at $u \in C$ can move to one of its neighbors $v \not\in C$ 
if the next configuration $(C \setminus \{u\}) \cup \{v\}$ 
is also connected.
At most one robot can move at one time step.
The \emph{behavior} of the swarm of robots is determined by a 
deterministic algorithm,
which specifies the next configuration 
by choosing a robot to move and its destination. 
The input of the algorithm is the current configuration. 
Thus the behavior of the swarm is always deterministic.
A configuration $C$ is said to be \emph{initial} (resp. \emph{terminal}) 
if $u_S \in C$ (resp. $u_R \in C$).
The sets ${\cal C}_I = \{ C \in {\cal C} \mid u_S \in C \}$ and 
${\cal C}_T = \{ C \in {\cal C} \mid u_R \in C \}$ 
denote the sets of initial and terminal configurations, respectively.

Let ${\cal A} = \{ A_k \mid k = 1, 2, \ldots \}$ be a set of algorithms,
where $A_k$ is an algorithm for the swarm of $k$ robots.
For an initial configuration $C_0 \in {\cal C}_I$, where $|C| = k$,
the behavior $C_0, C_1, \ldots$ of the swarm of $k$ robots 
under $A_k$ may eventually 
reach $C_d \in {\cal C}_T$ for the first time.
In this case we define that $\tau(C_0) = C_d$ and the transmission delay is $d$.
Otherwise, $\tau(C_0) = \bot$.
Suppose that there are $\alpha$ initial configurations 
$C_0^1, C_0^2, \ldots , C_0^{\alpha}$ 
such that $\tau(C_0^i) \not= \bot$ for $i = 1, 2, \ldots , \alpha$
and $\tau(C_0^i) \not= \tau(C_0^j)$ for all $1 \leq i < j \leq \alpha$.
Assuming that both $u_S$ and $u_R$ are accessible to the list of pairs $(C_0^i, \tau(C_i^0))$ 
for $i = 1, 2, \ldots , \alpha$,
we can transmit a symbol $s_i \in S$ as follows:
\begin{enumerate}
 \item 
The sender initializes the swarm of $k_i$ robots by moving the robots to
the vertices in $C_0^i$ and starts algorithm $A_{k_i}$, 
where $k_i = |C_0^i|$.
\item
The receiver recognizes $\tau(C_0^i)$ when a robot reaches $u_R$,
and knows that the sender transmitted $s_i$.
\end{enumerate}
Here we assume that the sender has a bag of infinitely many robots 
and can initialize the swarm of any number of robots, 
and the receiver recognizes $\tau(C_0^i)$ 
as soon as a robot reaches $u_R$.
In this paper, we fix 
this information transmission scheme and
analyze its performance.

Given a set of algorithms $\cal A$,
the performance of the scheme mainly depends on the selection of 
an initial configuration $C_0^i$ for each $s_i$.
A simple approach is to choose configurations with the same size $k$.
That is, $k_i = k$ for all $i = 1, 2, \ldots , \alpha$.
We first investigate this approach. 
Let $\mu_{A_k}$ be the maximum number 
such that there are $\mu_{A_k}$ initial configurations 
$C_0^1, C_0^2, \ldots , C_0^{\mu_{A_k}}$ with size $k$ 
satisfying $\tau(C_0^i) \not= \tau(C_0^j)$ for all $1 \leq i < j \leq \mu_{A_k}$.
Then we are interested in upper and lower bounds on 
$\mu_{A_k}$ and $\mu_k$,
where $\mu_k = \max_{A_k} \mu_{A_k}$ denotes 
the maximum amount of information
that the swarm of $k$ robots can carry in external memory 
under this information transmission scheme.

However, assigning all symbols to initial configurations 
with the same size is not always a good approach from the view of transmission delay,
since 
the value of $k_0$ that satisfies  $\mu_{k_0} \geq \alpha$ can be large 
and, as we will show later, the transmission delay is roughly greater than 
$dist(u_S,u_R) k_0$ 
where $dist(u_S,u_R)$ is the distance between $u_S$ and $u_R$
(i.e., the number of edges in the shortest path connecting them).
Let $k_i$ be the size of $C_0^i$ assigned to $s_i$ for $i = 1, 2, \ldots , \alpha$.
Then we are interested in upper and lower bounds on the average size
$K^* = \sum_{i=1}^{\alpha} p_i k_i$, 
since there is a chance that $K^* < k_0$ holds for some $\cal A$.
The reduction of $K^*$ contributes to the reduction of energy,
as well as the reduction of transmission delay, 
whose lower bound is roughly estimated by $dist(u_S,u_R) K^*$.
However, its upper bound still heavily depends on $\cal A$. 

We mainly investigate information transmission on \emph{8-grids} 
$G_8(m,n) = ({\mathbb Z}_{m,n}^2, N_8)$,
where ${\mathbb Z}_{m,n}^2 = \{(i,j) \in {\mathbb Z} \times {\mathbb Z} 
\mid 0 \leq i \leq m-1, 0 \leq j \leq n-1 \}$ and
$N_8 =\{ ((i,j),(i',j') \in {\mathbb Z}_{m,n}^2 \times {\mathbb Z}_{m,n}^2 
: (|i - i'| \leq 1) \wedge 
(|j - j'|) \leq 1) \wedge ((i,j) \not= (i',j')) \}$.

\noindent{\bf Our contributions.~}
Our contributions can be summarized as follows:
\begin{enumerate}
 \item 
We present exponential upper bounds of $\mu_k$ on general graphs and 8-grids. 
\item
Focusing on ``narrow'' channel realized by $G_8(m,2)$,
we present two algorithms that promises exponential code size 
to a fixed swarm size $k$. 
The first algorithm promises a near optimal code size $2^{k-14}$,
but has a large transmission delay.
The second algorithm promises a near optimal transmission delay,
but has a smaller code size $2^{\lfloor k/2 \rfloor}$. 
\item
We step into the second approach with these two algorithms 
and show upper bounds of the expected swarm size. 
\item
We finally analyze lower bounds of the expected swarm size 
based on the Shannon's lemma and noiseless coding theory. 
\end{enumerate}

\noindent{\bf Related works.~}
Study of information transmission by a binary (or $q$-ary) code 
is a main stream in computer science known as information theory (or coding theory).
It is a fully developed research area,
and there are many standard textbooks (e.g., \cite{CT06}).
One can notice the similarity between the first approach and the equal-length coding, 
and the second one and the variable-length coding.
Our results are also related with Shannon's source coding theorem,
which relates the optimal average code length with the entropy of the information source.

Information transmission on a graph by a swarm of anonymous oblivious 
mobile robots
has not been investigated to the best of our knowledge 
although, including a swarm of mobile robots,
distributed systems consisting of mobile entities 
have been extensively investigated 
in the last couple of decades.  
Two survey books \cite{FPS12,FPS19} include their models and 
algorithms to solve typical problems
such as gathering and convergence \cite{CFPS12,PPV15,SY99}, 
pattern formation \cite{DPV10,FPSW08,FYOKY15,SY99,YS10,YUKY16,YY14}, 
scattering and covering \cite{CDPIM10,DP07,TKGT18,IKY14}, 
flocking and marching \cite{AFSY08,CP07,GP04,YXCD08}, 
and searching and exploration \cite{BMPT11,DLPRT12,DYKY18,FIPS13}. 

A main theme in these studies is to solve the problems without 
using memory,
partly motivated by a challenge to prejudice 
that those systems cannot solve non-trivial problems 
and partly because memoryless algorithms are usually strong
against transient failures \cite{D74}.
For example, exploration of a finite square grid by a metamorphic robot of size 5 is proposed \cite{DYKY18}.
The metamorphic robot consists of $5$ anonymous oblivious mobile robots
(called \emph{modules} in the literature) 
and remembers the current search direction 
(i.e., right, left, up or down) 
in its configuration 
during the exploration.
In the existing pattern formation algorithms for anonymous oblivious 
mobile robots, 
in order to control the move order among the anonymous robots, 
they agree on 
a common coordinate system and remember 
it in external memory \cite{DPV10,FPSW08,FYOKY15,SY99,YS10,YUKY16}. 
In each of the oblivious algorithms one can find 
a trick to maintain external memory.

\noindent{\bf Organization of the paper.~} 
We define the swarm of anonymous oblivious robots on a graph 
and formalize the noiseless communication channel realized by 
the swarm with an algorithm in Section~\ref{sec:preliminary}. 
Section~\ref{sec:eqsizecode} first presents 
an upper bound of $\mu_{G_8(m,n),k}$. 
Then, we present two algorithms for information transmission 
by a swarm of fixed size. 
We consider the second approach in Section~\ref{sec:varsizecode} 
with the two algorithms and present 
lower and upper bounds of the expected swarm size. 
We conclude this paper with Section~\ref{sec:concl} 
which also includes 
future directions and open problems.

\section{Preliminaries} 
\label{sec:preliminary}

\subsection{Swarm of anonymous oblivious mobile robots}

Let $G = (V, E)$ be a simple connected undirected graph.
For $u \in V$, $N_G(u)$ denotes the set of neighbors of $u$ in $G$, and 
for a subset $U \subseteq V$, 
$G[U]$ denotes the subgraph of $G$ induced by $U$. 
For two vertices $u,v \in V$, 
$dist_G(u,v)$ denotes the distance between $u$ and $v$. 
We define two infinite regular graphs. 
Let ${\mathbb Z}$ be the set of integers. 
The \emph{$4$-grid} is an infinite graph $G_4 = ({\mathbb Z}^2, N_4)$, 
where 
$N_4 = \{((i,j), (i', j')) \in {\mathbb Z}^2 \times {\mathbb Z}^2 \mid 
((i=i') \wedge (|j-j'|=1)) \vee ((|i-i'|=1) \wedge (j=j'))\}$ 
and the \emph{$8$-grid} is an infinite graph $G_8 = ({\mathbb Z}^2, N_8)$, 
where 
$N_8 = \{((i,j),(i'.j')) \in {\mathbb Z}^2 \times {\mathbb Z}^2 \mid 
(|i-i'|\leq 1) \wedge (|j-j'|\leq 1) \wedge ((i,j) \not= (i',j'))\}$. 

We consider a swarm $R = \{r_1, r_2, \ldots, r_k\}$ 
of anonymous oblivious mobile robots on $G$. 
We use the indices just for description since a robot is anonymous. 
A robot is \emph{oblivious}, 
which means that it does not have memory to remember information obtained in the past.
A vertex of $G$ can accommodate at most one robot at each time step, 
and the set of vertices $C \subseteq V$ that accommodate robots 
is called the \emph{configuration} of $R$. 
We say that a configuration $C$ is \emph{connected} if $G[C]$ is connected. 
A robot $r$ at $u \in V$ can move to a vertex $v \in N_G(u) \setminus C$ through edge $(u,v)$
if the new configuration $C(u;v) = (C \setminus \{u\}) \cup \{v\}$ 
obtained by this movement of $r$ is connected.
In what follows, 
we always assume that a configuration means a connected one.
By ${\cal C}$ we denote the set of (connected) configurations $C$.

Each robot $r \in R$ knows $G$ and is aware of the vertex $u$ it resides.
It repeats a Look-Compute-Move cycle once initialized.
In each Look-Compute-Move cycle,
it first observes the positions of other robots on $G$ in Look phase. 
The visibility range is finite,
but is large enough to observe all the robots,
so that $r_i$ recognizes the current configuration $C$. 
In Compute phase, given $C$ and $u$ as inputs,
a common deterministic algorithm $A$ (on $r$) computes its next position $v$
in such a way that $C(u;v)$ is connected,
where $v = u$ is possible and it means to stay at $u$. 
In this paper, 
we assume that $A$ outputs $v = u$ on all robots $r$ but one.\footnote{
Since all robots know $G$ and $C$, 
$A$ on all robots can choose the same vertex $u \in C$ 
to move the robot at $u$ to a different vertex $v \in N_G(u) \setminus C$.
Algorithm $A$ may output $v = u$ on all robots $r$.}
Finally, it moves to $v$ through $(u,v)$ in Move phase. 
Observe that an oblivious robot which does not have memory
(except the input buffer of $A$ for $C$ and $u$, and the work space for $A$) can execute $A$.

We consider discrete time $0, 1, 2, \cdots$.
Given an algorithm $A$ and an initial configuration $C_0 \in {\cal C}_k$,
all robots are initialized at time $0$,
and synchronously execute a Look-Compute-Move cycle in each time step.
Then $A$ chooses a single robot $r$ at a vertex $u \in C_0$ and moves 
it to one of its neighbors $v$ to yield a configuration $C_1 = C_0(u;v)$ at time 1.
The swarm of robots repeats this process and yields configurations $C_2, C_3, \ldots$.
We call the evolution of configurations, i.e., 
${\mathcal B} = C_0, C_1, C_2, \ldots$, 
the \emph{behavior} of algorithm $A$ 
from an initial configuration $C_0$. 
Since $A$ is deterministic,
${\mathcal B}$ from $C_0$ under $A$ is uniquely determined.

\subsection{Graph as a noiseless communication channel} 

We investigate information transmission from 
the \emph{sender} $u_S \in V$ to the \emph{receiver} $u_R \in V$ 
by a swarm of robots, i.e., we regard $G$ as a communication channel. 
The sender $u_S$ is a memoryless information source that generates 
a symbol in $S = \{s_1, s_2, \ldots, s_{\alpha}\}$,
where $s_i$ is generated with probability $\Prob(s_i) = p_i$ 
for each $i=1,2, \ldots, \alpha$. 
Thus $\sum_{i=1}^{\alpha} p_i = 1$. 
Let ${\cal C}_I$ be the set of configurations that contain $u_S$ and 
${\cal C}_T$ be the set of configurations that contains $u_R$. 
With each symbol $s_i$, 
we associate a configuration $\gamma_I(s_i) \in {\cal C}_I$ and 
a set of configurations $\gamma_T(s_i) \subseteq {\cal C}_T$, 
and regard the pair $(\gamma_I(s_i), \gamma_T(s_i))$ as the \emph{codeword} of $s_i$. 
Here $\gamma_I(s_i) \neq \gamma_I(s_j)$ 
and $\gamma_T(s_i) \cap \gamma_T(s_j) = \emptyset$, for any $i \neq j$. 
To send $s_i$ from $u_S$, as explained in Section 1,
$R$ is initialized with $\gamma_I(s_i)$ at $u_S$.
When a configuration $C_d$ of $R$ in $\gamma_T(s_i)$ is reached,
$u_R \in C_d$ receives $s_i$.
We call the pair $\gamma=\langle \gamma_I, \gamma_T \rangle$ \emph{code} of $S$,
and $|\gamma|$, which is the size of $\gamma$, is the size $\alpha$ of $S$. 

Now we consider a coding scheme with an algorithm $A$. 
Let ${\mathcal B} = C_0, C_1, \ldots$ be a behavior of $R$ under $A$. 
We define a function $\tau_A: {\cal C}_I \to {\cal C}_T \cup \{\bot\}$ as follows:
If ${\mathcal B}$ eventually reaches a configuration $C_d \in {\cal C}_T$ for the first time, 
then $\tau_A(C_0) = C_d$. 
We call $d$ the \emph{transmission delay}. 
Otherwise, if ${\mathcal B}$ does not reach a configuration in ${\mathcal C}_T$ forever,
then $\tau_A(C_0) = \bot$. 
We say algorithm $A$ \emph{realizes} code $\gamma$ 
if $\tau_A(\gamma_I(s_i)) \in \gamma_T(s_i)$ for each $i=1,2, \ldots, |\gamma|$. 
Let $\mu_A = |\{\tau_A(C) \mid C \in {\mathcal C}_I\} \setminus \{\bot\}|$.

\begin{theorem}
 \label{obs:1}
Let $A$ be any algorithm. 
\begin{enumerate} 
 \item There is a code $\gamma$ with $|\gamma| = \mu_A$ which is realizable by $A$. 
 \item There is no code $\gamma$ with $|\gamma| = \mu_A+1$ which is realizable by $A$. 
\end{enumerate}
\end{theorem}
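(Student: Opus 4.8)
The plan is to treat the theorem as a counting statement about the image of the map $\tau_A$. Write $T = \{\tau_A(C) \mid C \in {\cal C}_I\} \setminus \{\bot\}$ for the set of distinct terminal configurations that $A$ actually produces from some initial configuration, so that $|T| = \mu_A$ by definition. Everything reduces to two facts: because $A$ is deterministic, $\tau_A$ is a genuine single-valued function; and a realizable code must send distinct symbols to disjoint terminal sets, each of which must contain the $\tau_A$-image of its own initial configuration.

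For the first claim I would construct an explicit code of size $\mu_A$. Enumerate $T = \{D_1, \ldots, D_{\mu_A}\}$ and, for each $i$, choose some $C_i \in {\cal C}_I$ with $\tau_A(C_i) = D_i$; such a $C_i$ exists precisely because $D_i \in T$. Set $\gamma_I(s_i) = C_i$ and $\gamma_T(s_i) = \{D_i\}$. I would then verify the two defining conditions of a code: the terminal sets are singletons with distinct elements, hence pairwise disjoint; and the initial configurations are distinct, since $C_i = C_j$ would force $D_i = \tau_A(C_i) = \tau_A(C_j) = D_j$. Realizability is then immediate, as $\tau_A(\gamma_I(s_i)) = D_i \in \gamma_T(s_i)$.

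For the second claim I would argue by contradiction using the pigeonhole principle. Suppose some code $\gamma$ with $|\gamma| = \mu_A + 1$ were realizable by $A$. Realizability gives $\tau_A(\gamma_I(s_i)) \in \gamma_T(s_i)$ for every $i$; in particular each such value is a non-$\bot$ terminal configuration and therefore lies in $T$. Since there are $\mu_A + 1$ symbols but $|T| = \mu_A$, two of these values must coincide, say $\tau_A(\gamma_I(s_i)) = \tau_A(\gamma_I(s_j))$ with $i \neq j$. But then this common configuration lies in both $\gamma_T(s_i)$ and $\gamma_T(s_j)$, contradicting the disjointness requirement $\gamma_T(s_i) \cap \gamma_T(s_j) = \emptyset$.

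The argument is essentially a packing bound, so I do not anticipate a real obstacle; the only point requiring care is matching the distinctness and disjointness conditions of a code against the single-valuedness of $\tau_A$. Concretely, one must observe that distinct symbols are forced, through the composite $\tau_A \circ \gamma_I$, to land on distinct terminal configurations, which is exactly why $\mu_A$ — the number of distinct realizable terminal configurations — is the sharp bound on $|\gamma|$.
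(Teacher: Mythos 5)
Your proof is correct and follows essentially the same route as the paper: both construct a size-$\mu_A$ code by choosing one initial configuration per distinct non-$\bot$ image of $\tau_A$ with singleton terminal sets, and both rule out size $\mu_A+1$ by a pigeonhole argument forcing two symbols to share a terminal configuration, violating disjointness. Your writeup is somewhat more explicit than the paper's (e.g., verifying distinctness of the chosen initial configurations, which the paper leaves as ``easy to observe''), but the underlying argument is identical.
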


\begin{proof}
As for $1$, we construct a code 
$\gamma^A = \langle \gamma_I^A, \gamma_T^A \rangle$ 
with $|\gamma^A| = \mu_A$ realizable by $A$ as follows: 
Let $S_{\mu_A} = \{ s_1, s_2, \ldots, s_{\mu_A}\}$.
By definition,
there are $\mu_A$ configurations $C_0^1, C_0^2, \ldots , C_0^{\mu_A}$ in ${\mathcal C}_I$
such that $\tau_A(C_0^i) \not= \bot$ for all $i = 1, 2, \ldots , \mu_A$
and $\tau_A(C_0^i) \not= \tau_A(C_0^j)$ for all $1 \leq i < j \leq \mu_A$.
We then define $\gamma_I^A(s_i) = C_0^i$ 
and $\gamma_T^A(s_i) = \{\tau_A(C_0^i)\}$ for $i = 1, 2, \ldots , \mu_A$.
It is easy to observe that $\gamma^A$ is indeed a code of size $\mu_A$. 

As for $2$, suppose that there exists a code 
$\gamma = \langle \gamma_I, \gamma_T \rangle$ with $|\gamma|=\mu_A +1$ 
to derive a contradiction. 
Since $|\gamma|=\mu_A +1 > \mu_A$, there are two symbols $s_i$ and 
$s_j(\neq s_i)$ such that $\tau_A(\gamma_I(s_i)) = \tau_A (\gamma_I(s_j))$, 
a contradiction. 
\end{proof}

The proof of Theorem~\ref{obs:1} gives a construction method 
to construct code $\gamma^A$ from algorithm $A$. 
The transmission delay of $\gamma^A$ depends on the choice of 
$\gamma_I^A(s_i) \in {\mathcal C}_I$ for each $s_i \in S_{\mu_A}$;
any $C$ such that $\tau_A(C) = \tau_A(\gamma_I^A(s_i))$ can be chosen instead.
Let $d_A(C)$ be the transmission delay of the behavior of $R$ starting from $C$ under $A$. 
To reduce the transmission delay of $\gamma^A$, 
it is natural to choose an initial configuration $C$ that minimizes $d_A(C)$, 
and we thus assume that such a $C$ is chosen in the construction of $\gamma^A$. 
Then the transmission delay $d_{\gamma^A}$ of $\gamma^A$ 
is ${\displaystyle \max_{s_i \in S_{\mu_A}} d_A(\gamma_I^A(s_i)) }$. 
We are interested in the maximization of $\mu_A$ and the minimization of $d_A$.
Let ${\displaystyle \mu_{N,k} = \max_A \mu_A }$ and ${\displaystyle d_{N,k} = \min_A d_A }$,
where ${\displaystyle d_A = \max_{C: \tau_A(C) \neq \bot} d_A(C) }$ and $N=\langle G, u_S, u_R \rangle$.

\subsection{General graphs}

Since $\mu_{N,1} = 1$, 
we assume $k \geq 2$ in what follows. 
We also assume $dist_G(u_S,u_R) \geq k$, i.e., 
${\mathcal C}_I \cap {\mathcal C}_T = \emptyset$. 
Indeed we are interested in the case where $dist_G(u_S, u_R) \gg k$.   

\begin{observation}
\label{O201}
For any $N=\langle G, u_S, u_R \rangle$,
$\mu_{N,k} \leq |{\mathcal C}_I|$. 
\end{observation}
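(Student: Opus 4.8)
The plan is to exploit the fact that $\tau_A$ is a genuine function whose domain is $\mathcal{C}_I$, so its image cannot exceed its domain in size. First I would recall the definition $\mu_A = |\{\tau_A(C) \mid C \in \mathcal{C}_I\} \setminus \{\bot\}|$, so that $\mu_A$ simply counts the distinct non-$\bot$ values attained by $\tau_A$. Because the behavior $\mathcal{B}$ from any $C_0$ is uniquely determined (the algorithm $A$ is deterministic, as noted earlier), $\tau_A \colon \mathcal{C}_I \to \mathcal{C}_T \cup \{\bot\}$ assigns to each initial configuration exactly one value, hence the full image $\{\tau_A(C) \mid C \in \mathcal{C}_I\}$ has at most $|\mathcal{C}_I|$ elements. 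Deleting the single symbol $\bot$ can only remove an element, so $\mu_A \leq |\{\tau_A(C) \mid C \in \mathcal{C}_I\}| \leq |\mathcal{C}_I|$.

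Equivalently, and perhaps more transparently, I would construct an injection from the set of realized terminal configurations into $\mathcal{C}_I$: for each reachable terminal $C_d \neq \bot$ in the image, select one initial configuration $C_0$ with $\tau_A(C_0) = C_d$. Two distinct terminals force distinct selections (since a single $C_0$ has a single image under $\tau_A$), so the selection is injective on a set of size $\mu_A$, giving $\mu_A \leq |\mathcal{C}_I|$ directly. Finally, since this inequality holds for \emph{every} algorithm $A$ while the right-hand side does not depend on $A$, I would take the maximum over $A$ to conclude $\mu_{N,k} = \max_A \mu_A \leq |\mathcal{C}_I|$.

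I do not expect a genuine obstacle here, as the statement is essentially a pigeonhole observation about the image of a function. The only points requiring a moment's care are bookkeeping ones: noting that excising $\bot$ cannot enlarge the cardinality of the image, and observing that when $G$ is infinite the inequality is to be read as a (possibly infinite) comparison of cardinalities, which still holds verbatim.
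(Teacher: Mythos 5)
Your proof is correct; the paper states this as an unproven Observation, and your argument (the image of the function $\tau_A \colon \mathcal{C}_I \to \mathcal{C}_T \cup \{\bot\}$ has at most $|\mathcal{C}_I|$ elements, discarding $\bot$ only shrinks it, and the bound survives taking $\max_A$ since the right-hand side is independent of $A$) is exactly the pigeonhole reasoning the authors take for granted. Nothing is missing.
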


\begin{theorem}
\label{T201}
Let $N=\langle G, u_S, u_R \rangle$ with $dist_G(u_S, u_R) \geq k$. 
If $G$ has a cut vertex whose removal disconnects $u_S$ and $u_R$, 
then $\mu_{N,k} = 0$, that is, 
$\tau_A(C_0) = \bot$ for any algorithm $A$ and any configuration $C_0 \in {\mathcal C}_I$.
\end{theorem}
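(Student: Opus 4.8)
The plan is to track a single integer statistic of the configuration---the number of robots lying strictly on the receiver's side of the cut---and to show that the connectivity requirement makes this quantity behave almost monotonically, so that it can never grow large enough to place a robot on $u_R$. Let $w$ be the cut vertex, and let $V_S$ and $V_R$ be the connected components of $G-w$ containing $u_S$ and $u_R$, respectively. Write $a = dist_G(u_S,w)$ and $b = dist_G(w,u_R)$; since every path from $u_S$ to $u_R$ passes through $w$ we have $a+b = dist_G(u_S,u_R) \ge k$, and clearly $a,b \ge 1$. For a configuration $C$ set $n_R(C) = |C \cap V_R|$. I would first record the elementary fact that, because $V_R$ is a component of $G-w$, its only neighbour outside $V_R$ is $w$: hence a robot can enter $V_R$ only by stepping from $w$, and whenever a configuration occupies vertices on both sides of the cut it must also occupy $w$, since otherwise $G[C]$ is disconnected.

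Next I would prove the crossing lemma: along any behaviour $n_R$ can increase only through the move of a robot from $w$ to some $v \in V_R$, and for the resulting configuration $C(w;v)$ to be connected---now that $w$ is vacated---every remaining robot must already lie in $V_R$. Thus such a move is possible only from a configuration with $n_R = k-1$, so $n_R$ is non-increasing whenever $n_R \le k-2$. I would then bound $n_R(C_0)$: if $C_0$ contains a robot in $V_R$, then it contains both $u_S \in V_S$ and a vertex of $V_R$, forcing $w \in C_0$, and a simple $u_S$--$w$ path inside $G[C_0]$ uses at least $a+1$ vertices, all in $V_S \cup \{w\}$, leaving at most $k-a-1$ vertices for $V_R$; hence $n_R(C_0) \le k-a-1$, while $n_R(C_0)=0$ trivially otherwise. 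In either case $n_R(C_0) \le \min(k-2,\,b-1)$, using $a \ge 1$ and $a+b \ge k$. Combining this with the crossing lemma by induction on $t$ yields the invariant $n_R(C_t) \le \min(k-2,\,b-1)$ for every $t$.

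Finally I would extract the conclusion. Suppose some reachable $C_t$ contained $u_R$. If $w \notin C_t$, then $C_t$ is a connected set meeting $V_R$ but avoiding $w$, so $C_t \subseteq V_R$ and $n_R(C_t)=k$, contradicting $n_R(C_t) \le k-2$. If $w \in C_t$, take a simple path from $w$ to $u_R$ in $G[C_t]$; it visits $w$ only at its start, so after the first step it can never return to $w$ and must stay in $V_R$, contributing at least $b$ distinct vertices of $V_R$ and giving $n_R(C_t) \ge b$, contradicting $n_R(C_t) \le b-1$. Hence no reachable configuration contains $u_R$, so $\tau_A(C_0)=\bot$ for every algorithm $A$ and every $C_0 \in \mathcal{C}_I$, and therefore $\mu_{N,k}=0$.

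The main obstacle is the crossing lemma: the whole argument rests on the observation that the connectivity constraint forbids a robot from crossing the cut while any teammate remains on the other side, so the swarm would have to migrate in its entirety before advancing---and the distance hypothesis $a+b \ge k$ guarantees it never has enough robots both to maintain a connection back across $w$ and to reach as deep into $V_R$ as $u_R$. Care is also needed to confirm that no move other than the $w \to V_R$ step can raise $n_R$, and to handle the degenerate counts (the cases $n_R(C_0)=0$ and the possible presence of further components of $G-w$) cleanly.
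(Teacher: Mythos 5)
Your proof is correct, and while it rests on the same physical observation as the paper's proof---the only gateway into $u_R$'s component $V_R$ is the cut vertex $w$, and a robot stepping off $w$ into $V_R$ disconnects the swarm unless all $k-1$ other robots are already inside $V_R$---you organize the argument quite differently. The paper argues by tracing the behavior to its first offending move, splitting on the initial configuration: if $C_0$ lies entirely in $V_S \cup \{w\}$, the first robot ever to leave that set must be the one sitting at $w$, and its move yields a disconnected configuration, a contradiction; if $C_0$ straddles the cut, the paper asserts (rather tersely, citing only $dist_G(u_S,u_R)\geq k$) that the robot parked at $w$ must eventually move for the first time, and that move disconnects the swarm. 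You instead set up a monotone invariant: $n_R=|C\cap V_R|$ can increase only from the value $k-1$, the path-counting bound at time $0$ gives $n_R(C_0)\leq \min(k-2,\,b-1)$, induction preserves this forever, and occupancy of $u_R$ would force $n_R\geq b$ (or $n_R=k$ when $w$ is vacant). The two routes buy different things. The paper's first-crossing analysis is shorter in its first case, where it shows the swarm literally never exits the sender's side; your invariant treats straddling and non-straddling initial configurations uniformly, postponing all case analysis to the final step, and---importantly---it supplies exactly the counting that the paper's second case leaves implicit: as long as $w$ is occupied no robot can cross, at least $a$ robots stay pinned in $V_S$, reaching $u_R$ would require at least $b$ robots beyond $w$, and $a+b\geq k$ makes both impossible simultaneously, which is why the robot at $w$ would have to move at all. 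In that sense your write-up is the more self-contained of the two.
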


\begin{proof}
To derive a contradiction, 
we assume that there is an algorithm $A$ and a configuration $C_0 \in {\mathcal C}_T$ 
such that the behavior ${\mathcal B} = C_0, C_1, \ldots$ of $R$ under $A$ 
eventually reaches a configuration $C_d \in {\mathcal C}_T$. 
Let $v \in V$ be a cut vertex of $G$ whose removal from $G$ disconnects $u_S$ and $u_R$. 
Thus $G[V \setminus \{v\}]$ consists of more than one connected components.
Let $G_1 = (V_1 \setminus \{v\}, E_1)$ be the one that contains $u_S$.
We first assume $C_0 \subseteq V_1$.
Let $t$ be the smallest index such that $C_t \subseteq V_1$ and $C_{t+1} \not\subseteq V_1$. 
Then there is a vertex $w \in V \setminus V_1$ such that $w \in C_{t+1}$ and $(v,w) \in E$,
and the robot at $v$ at $t$ moves to $w$.
It is a contradiction since $v \not\in C_{t+1}$ and hence $C_{t+1}$ is not connected.

Next assume that $C_0 \not\subseteq V_1$.
Since $dist_G(u_S,u_R) \geq k$,
there is a time instant $t$ such that the robot $r$ at $v$ moves for the first time.
Then $C_{t+1}$ is not connected since $v \not\in C_{t+1}$,
and a contradiction is derived.
\end{proof}

\begin{theorem}
\label{T202}
Let $N = \langle G_4, u_S, u_R \rangle$ with $dist_{G_4}(u_S, u_R) \geq k$. 
Then $\mu_{N,k} = 0$, that is, 
for any algorithm $A$ and for any configuration $C_0 \in {\mathcal C}_I$, 
$\tau_A(C_0) = \bot$. 
\end{theorem}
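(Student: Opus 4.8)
The plan is to trap the swarm in a bounded region around $u_S$ by showing that the bounding box of the configuration can never grow, and that this box is too small, in $L_1$ (i.e.\ $G_4$) distance, to reach $u_R$ when $dist_{G_4}(u_S,u_R)\ge k$. Throughout I would use two elementary facts about the $4$-grid: a single legal step moves exactly one robot and changes exactly one of its two coordinates by $\pm1$; and $dist_{G_4}\big((i,j),(i',j')\big)=|i-i'|+|j-j'|$. For a configuration $C$ write $\max_i(C),\min_i(C),\max_j(C),\min_j(C)$ for its extreme coordinates, and call $[\min_i(C),\max_i(C)]\times[\min_j(C),\max_j(C)]$ its bounding box.

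The heart of the argument, and the step I expect to be the main obstacle, is a monotonicity lemma: along any behavior $C_0,C_1,\dots$ under any algorithm, $\max_i(C_{t+1})\le\max_i(C_t)$, while $\min_i$ never decreases, and symmetrically for $j$. I would prove it by contradiction at the first step that violates it. If $C_{t+1}$ occupies column $\max_i(C_t)+1$, then, since one step changes a single coordinate by one, the responsible robot must have stepped from $(\max_i(C_t),y)$ to $(\max_i(C_t)+1,y)$. In $C_{t+1}$ the four $G_4$-neighbors of its new cell are $(\max_i(C_t),y)$ (just vacated), $(\max_i(C_t)+2,y)$ (beyond the former maximum, hence empty), and $(\max_i(C_t)+1,y\pm1)$ (in a column that was empty in $C_t$ and into which no other robot moved); all four are unoccupied, so the robot is isolated and $C_{t+1}$ is disconnected, contradicting the movement rule. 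This is exactly where the absence of diagonal edges is used: on $G_8$ the moved robot could keep a diagonal neighbor, which is why the statement is special to $G_4$. It follows that every $C_t$ stays inside the bounding box of $C_0$.

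Next I would bound the size of that box. For a connected configuration $C$ of $k$ cells with bounding box of width $W$ and height $H$, I claim $k\ge W+H-1$. Take a spanning tree of $G_4[C]$; contracting all cells that share a column yields a connected graph on the $W$ columns met by $C$ (an interval, by connectedness), so at least $W-1$ of the tree's edges are horizontal, and symmetrically at least $H-1$ are vertical. Since the tree has $k-1$ edges, $k-1\ge(W-1)+(H-1)$. Hence the bounding box of $C_0$ has $L_1$-diameter $(W-1)+(H-1)\le k-1$.

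Finally I would assemble the pieces. As $u_S\in C_0$, the point $u_S$ lies in the bounding box of $C_0$. If some $C_d\in{\mathcal C}_T$ were reached, then $u_R\in C_d$ would lie in that same box by the monotonicity lemma, forcing $dist_{G_4}(u_S,u_R)\le k-1$ and contradicting the hypothesis $dist_{G_4}(u_S,u_R)\ge k$. Therefore no behavior ever reaches ${\mathcal C}_T$; that is, $\tau_A(C_0)=\bot$ for every algorithm $A$ and every $C_0\in{\mathcal C}_I$, and consequently $\mu_{N,k}=0$.
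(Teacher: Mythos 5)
Your proof is correct, and its engine is the same as the paper's: in $G_4$ the first robot that steps beyond the current extreme column (or row) has all four of its neighbours empty --- the cell it just vacated, the cell one further out, and the two cells in the newly entered column --- so it becomes isolated and the move is illegal. The paper, however, applies this monotonicity to a single coordinate only: it first asserts (without proof) that either $x_{\max}(0) < i_R$ or $y_{\max}(0) < j_R$, reduces to the former case by symmetry, and then shows $x_{\max}$ can never increase. Your assembly is different and in fact more complete: you prove monotonicity of the whole bounding box, add the explicit counting lemma $k \geq W + H - 1$ via the spanning-tree/contraction argument, and finish by bounding the box's $L_1$-diameter by $k-1$. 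Notably, the paper's unproven either/or claim is exactly where your counting lemma is needed (if $C_0$ spanned both column $i_R$ and row $j_R$, then $k \geq W + H - 1 \geq (i_R - i_S) + (j_R - j_S) + 1 \geq k+1$, a contradiction), so your write-up supplies a detail the paper glosses over, at the cost of a somewhat longer argument that avoids the case analysis entirely. One small point worth stating explicitly: the step from ``the moved robot is isolated'' to ``$C_{t+1}$ is disconnected'' needs $k \geq 2$, which is the paper's standing assumption in this section.
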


\begin{proof}
To derive a contradiction, 
we assume that there is an algorithm $A$ and a configuration $C_0 \in {\mathcal C}_I$ 
such that the behavior $B= C_0, C_1, \ldots $ of $R$ under $A$ 
eventually reaches a configuration $C_d \in {\mathcal C}_T$. 
Since $dist_G(u_S, u_R) \geq k$, 
$d \geq 1$ and $|i_R - i_S| + |j_R - j_S| \geq k$, 
where $u_S = (i_S, j_S)$ and $u_R = (i_R, j_R)$. 
We assume without loss of generality that $i_R > i_S$ and $j_R \geq j_S$. 

For configuration $C_t$, let $x_{\max}(t) = \max \{i \mid (i,j) \in C_t\}$ and 
$y_{\max} = \max \{j \mid (i,j) \in C_t \}$. 
If $x_{\max}(0) \geq i_R$, then $j_S \leq y_{\max} < j_R$. 
Thus either $i_S \leq x_{\max}(0) < i_R$ or $j_S \leq y_{\max}(0) < j_R$ holds. 
We assume without loss of generality that 
$i_S \leq x_{\max} < i_R$. 

Let $t$ be the smallest index such that $x_{\max}(0) + 1 = x_{\max}(t)$ holds. 
That is, $x_{\max}(t-1) = x_{\max}(0)$. 
Suppose that the robot $r$ at vertex $u \in C_{t-1}$ moves. 
Then $C_t$ is not connected, 
since $r$ is the only robot with $x$-coordinate $x_{max}(0)+1$ in $C_t$ 
and $u \not\in C_t$.
\end{proof}

\begin{theorem}
\label{theorem:computable}
Let $G$ be a finite graph and $N = \langle G, u_S, u_R \rangle$. 
Then, $\mu_{N,k}$ 
and $d_{N,k}$
are computable. 
\end{theorem}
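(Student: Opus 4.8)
The plan is to exploit that for a finite graph $G$ the entire dynamics lives in a finite state space, so that both quantities are the value of a finite optimization each of whose ingredients can be produced by a terminating procedure.

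First I would observe that the set ${\mathcal C}_k$ of connected configurations of size $k$ is finite, since $|{\mathcal C}_k| \le \binom{|V|}{k}$. Because a move preserves the number of robots, the behavior of the swarm of $k$ robots starting from any $C_0 \in {\mathcal C}_k$ never leaves ${\mathcal C}_k$. Hence, as far as these $k$-robot behaviors are concerned, an algorithm $A$ is completely described by its induced transition function $f_A : {\mathcal C}_k \to {\mathcal C}_k$, where $f_A(C)$ is the unique successor configuration $A$ produces from $C$. By the movement rule, $f_A(C)$ must lie in the finite successor set $\mathrm{Succ}(C) = \{C\} \cup \{C(u;v) : u \in C,\ v \in N_G(u)\setminus C,\ G[C(u;v)] \text{ connected}\}$. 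Conversely, since every robot observes the entire configuration $C$ and knows $G$, any $f$ with $f(C) \in \mathrm{Succ}(C)$ for all $C$ is realized by some table-lookup algorithm. Therefore the set $\mathcal{F} = \{ f : {\mathcal C}_k \to {\mathcal C}_k \mid f(C) \in \mathrm{Succ}(C)\ \forall C\}$ is finite and effectively enumerable, and ranging over $\mathcal{F}$ is the same as ranging over all algorithms when evaluating $\mu_A$ and $d_A$.

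Next, for a fixed $f \in \mathcal{F}$ and a fixed initial configuration $C_0 \in {\mathcal C}_I \cap {\mathcal C}_k$, I would compute $\tau_f(C_0)$ and its delay by iterating $C_{t+1} = f(C_t)$ while recording the visited configurations. Since the orbit lies in the finite set ${\mathcal C}_k$, within at most $|{\mathcal C}_k|$ steps either some $C_t$ belongs to ${\mathcal C}_T$ — in which case the first such $t$ gives the delay $d_f(C_0) = t$ and $\tau_f(C_0) = C_t$ — or a configuration repeats before any terminal one is reached, in which case the orbit is eventually periodic and never enters ${\mathcal C}_T$, so $\tau_f(C_0) = \bot$. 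This loop detection makes $\tau_f(C_0)$ and $d_f(C_0)$ computable. Collecting these over the finite set of initial configurations yields $\mu_f = |\{\tau_f(C) : C \in {\mathcal C}_I \cap {\mathcal C}_k\} \setminus \{\bot\}|$ and $d_f = \max_{C : \tau_f(C) \ne \bot} d_f(C)$. Finally, $\mu_{N,k} = \max_{f \in \mathcal{F}} \mu_f$ and $d_{N,k} = \min_{f \in \mathcal{F}} d_f$ (the latter minimum restricted to those $f$ for which at least one configuration reaches a terminal one, so that $d_f$ is defined) are obtained as a maximum and a minimum over a finite, explicitly constructible set.

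The step I expect to require the most care is the reduction in the second paragraph: arguing that nothing is lost by replacing an arbitrary algorithm with its finite transition table on ${\mathcal C}_k$. The content there is twofold — that the $k$-robot behavior depends only on $f_A$ restricted to ${\mathcal C}_k$ (which holds precisely because orbits stay in ${\mathcal C}_k$), and that every admissible transition table is actually realizable by a legitimate oblivious algorithm. Once this equivalence is secured, computability follows immediately from the finiteness of $\mathcal{F}$ and the guaranteed termination of the orbit simulation.
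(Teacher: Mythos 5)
Your proof is correct, but it takes a genuinely different route from the paper's. You replace the quantification over all algorithms by a quantification over the finite set $\mathcal{F}$ of admissible transition tables on $\mathcal{C}_k$, justify both directions of that replacement (behaviors of $k$ robots never leave $\mathcal{C}_k$, and every admissible table is realized by a table-lookup algorithm because each robot observes the whole configuration), and then decide every orbit by simulation with pigeonhole loop detection, finishing with a brute-force max/min over $\mathcal{F}$. The paper instead forms the transition diagram $X=(\mathcal{C}\cup\{v_S,v_R\},F)$, joining $v_S$ to all of $\mathcal{C}_I$ and all of $\mathcal{C}_T$ to $v_R$, and identifies $\mu_{N,k}$ with the value of a maximum flow from $v_S$ to $v_R$, i.e., with a maximum family of internally vertex-disjoint paths; $d_{N,k}$ is then computed by enumerating the finitely many families of $\mu_{N,k}$ disjoint paths and minimizing the length of the longest path. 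The paper's route is more structural and computationally far cheaper --- one max-flow computation on the configuration graph versus your enumeration of $\mathcal{F}$, whose size is exponential in $|\mathcal{C}_k|$ --- but it leaves implicit exactly the two facts your second paragraph makes explicit: that determinism forces trajectories reaching distinct terminal configurations to be internally vertex-disjoint (once two behaviors meet, they coincide forever), and that conversely any disjoint path family is realizable by a single oblivious algorithm. One further difference worth noting: your $d_{N,k}=\min_{f}d_f$ ranges over all tables with at least one non-$\bot$ orbit, which matches the paper's literal definition $d_{N,k}=\min_A d_A$, whereas the paper's proof minimizes only over path families of the maximum size $\mu_{N,k}$, i.e., over algorithms that simultaneously achieve the optimal code size; these two quantities need not coincide, and your reading is the one faithful to the stated definition.
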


\begin{proof}
We consider the transition diagram (i.e., directed graph)
$X = ({\mathcal C} \cup \{ v_S, v_R \},F)$ of the swarm on $G$,
where $(C,C') \in F$ if and only if 1) there are $u \in C$ and $v \in C'$ such that $C' = C(u;v)$,
2) $C = v_S$ and $C' \in {\mathcal C}_I$, 
or 3) $C \in {\mathcal C}_T$ and $C' = v_R$.
Then $\mu_{N,k}$ is the size of the maxflow of $X$ from $v_S$ to $v_R$. 

As for $d_{N,k}$,
there are only a finite number of different sets $\Pi$ of $\mu_{N,k}$ disjoint paths 
connecting $v_S$ and $v_R$.
By comparing the length of the longest path in each $\Pi$,
we can calculate $d_{N,k}$.
\end{proof}

By Theorem~\ref{theorem:computable}, 
we can design an optimal algorithm $A$ to achieve $\mu_{N,k}$
such that $A$ makes $R$ follow one of the $\mu_{N,k}$ disjoint paths connecting $v_S$ and $v_R$.
Although an optimal $A$ exists, for a general $N$,
it is in generally hard to explicitly describe $A$ and to estimate $\mu_{N,k}$ or $d_{N,k}$.
However, we have the following rather obvious upper bound of $d_{N,k}$ 
when the distance between $u_S$ and $u_R$ is large.

\begin{theorem}
\label{T204}
Let $G$ be a finite graph and $N = \langle G, u_S, u_R \rangle$,
where $dist_G(u_S, u_R) \geq 2k-1$ and
$C_0 \cap C_d = \emptyset$ for any $C_0 \in {\cal C}_I$ and $C_d \in {\cal C}_T$.
Then $d_{N,k} \geq k(dist_G(u_S, u_R) - 2(k-1))$.
\end{theorem}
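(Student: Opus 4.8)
The plan is to use a potential argument based on the distance to the receiver. Define $\Phi(C) = \sum_{v \in C} dist_G(v, u_R)$ for any configuration $C$, and fix an arbitrary behavior ${\mathcal B} = C_0, C_1, \ldots, C_d$ of $R$ under some algorithm $A$ with $C_0 \in {\cal C}_I$, $C_d \in {\cal C}_T$, and $\tau_A(C_0) = C_d$. Since the claimed inequality is a lower bound that will hold for \emph{every} such behavior, it suffices to show $d \geq k(dist_G(u_S,u_R) - 2(k-1))$ for one arbitrary behavior: this gives $d_A(C) \geq k(dist_G(u_S,u_R)-2(k-1))$ for every $A$ and every starting $C$ with $\tau_A(C) \neq \bot$, hence the same bound for $d_A = \max_C d_A(C)$ and finally for $d_{N,k} = \min_A d_A$.

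First I would bound the two endpoints of $\Phi$. For the initial configuration I use that $C_0$ is connected with exactly $k$ vertices, so its diameter is at most $k-1$; since $u_S \in C_0$, every $v \in C_0$ satisfies $dist_G(u_S, v) \leq k-1$, and the triangle inequality gives $dist_G(v, u_R) \geq dist_G(u_S,u_R) - (k-1)$. Summing over the $k$ vertices yields $\Phi(C_0) \geq k\bigl(dist_G(u_S,u_R) - (k-1)\bigr)$. Symmetrically, the terminal configuration $C_d$ is connected, has $k$ vertices, and contains $u_R$, so every vertex lies within distance $k-1$ of $u_R$, whence $\Phi(C_d) \leq k(k-1)$.

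Next I would track how $\Phi$ changes in a single step. At each time step exactly one robot moves along one edge $(u,v)$, so only a single summand of $\Phi$ changes, from $dist_G(u, u_R)$ to $dist_G(v, u_R)$; as $u$ and $v$ are adjacent these differ by at most $1$. Hence $|\Phi(C_{t+1}) - \Phi(C_t)| \leq 1$ for every $t$, and therefore $\Phi(C_0) - \Phi(C_d) \leq d$. Combining this with the two endpoint estimates,
\[
d \;\geq\; \Phi(C_0) - \Phi(C_d) \;\geq\; k\bigl(dist_G(u_S,u_R) - (k-1)\bigr) - k(k-1) \;=\; k\bigl(dist_G(u_S,u_R) - 2(k-1)\bigr),
\]
which is exactly the desired inequality.

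I expect the only delicate points to be the two diameter estimates, where connectivity of the configuration is essential (a disconnected set of $k$ vertices need not have diameter at most $k-1$), and the verification that one admissible move perturbs $\Phi$ by at most $1$. Note that the hypothesis $C_0 \cap C_d = \emptyset$ is not actually needed for this argument; indeed it is already forced by $dist_G(u_S,u_R) \geq 2k-1$, since a common vertex would lie within distance $k-1$ of both $u_S$ and $u_R$ and hence force $dist_G(u_S,u_R) \leq 2(k-1) < 2k-1$. The real work is therefore entirely in the potential bookkeeping above.
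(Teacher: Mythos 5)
Your proof is correct and is essentially the paper's own argument in aggregated form: the paper observes that each robot starts within distance $k-1$ of $u_S$ and ends within distance $k-1$ of $u_R$ (both by connectivity of the $k$-vertex configurations), so each robot must traverse at least $dist_G(u_S,u_R)-2(k-1)$ edges, and since only one robot moves per time step the delay is at least $k\left(dist_G(u_S,u_R)-2(k-1)\right)$; your potential $\Phi(C)=\sum_{v\in C} dist_G(v,u_R)$ is exactly the sum of these per-robot distances, and your bound $|\Phi(C_{t+1})-\Phi(C_t)|\leq 1$ is exactly the one-move-per-step observation, so the two proofs differ only in bookkeeping (yours works at the level of configurations and never needs to track robot identities along the behavior). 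Your side remark that the hypothesis $C_0 \cap C_d = \emptyset$ is already implied by $dist_G(u_S,u_R)\geq 2k-1$ is also accurate.
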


\begin{proof}
For any algorithm $A$,
let ${\mathcal B} = C_0, C_1, C_2, \ldots$ 
be the behavior from any initial configuration $C_0 \in {\mathcal C}_I$ under $A$.
We assume that ${\mathcal B}$ eventually reaches a configuration $C_d \in {\mathcal C}_T$.
For each robot $r_i \in R$, let $x_i$ and $y_i$ be the vertices that it resides
at time 0 and $d$, respectively.
Thus $x_i \in C_0$ and $y_i \in C_d$.
Obviously $dist_G(u_S, x_i) \leq k-1$ and $dist_G(u_R, y_i) \leq k-1$.
Hence each robot moves at least $dist_G(u_S, u_R) - 2(k-1)$ until time $d$,
which implies that $d_A(C_0) \geq k (dist_G(u_S, u_R) - 2(k-1))$. 
\end{proof}

The general policy to construct a code with a small transmission delay
is to use a swarm with a small number of robots.

\section{Finite 8-grids}
\label{sec:eqsizecode}

We investigate $8$-grids in the rest of this paper.
Let ${\mathbb Z}_{m,n}^2 = \{(i,j) \in {\mathbb Z} \times {\mathbb Z} \mid 
0 \leq i \leq m-1, 0 \leq j \leq n-1\}$ 
and $G_8(m,n) = G_8[{\mathbb Z}_{m,n}^2]$. 
Clearly, $\mu_{\langle G_8(m,1), u_S, u_R \rangle, k} = 0$ and
$d_{\langle G_8(m,1), u_S, u_R \rangle, k} = \infty$ for any $m > k \geq 2$. 
We illustrate a swarm in a 2D square grid instead of $G_8(m,n)$. 
Each cell of the square grid is associated with the underlying 
$x$-$y$ coordinate system and each vertex $(i,j) \in G_8(m, n)$ corresponds to cell $(i,j)$. 
Hence $N_{G_8(m,n)}((i,j))$ corresponds to the eight adjacent cells, 
i.e., 
$(i+1,j), (i+1,j+1),(i,j+1),(i-1,j+1),(i-1, j),(i-1,j-1),(i,j-1),(i+1,j-1)$. 
We call the cells $\{(i, j) \mid i=0,1,2,\ldots, m-1\}$ the \emph{$j$th row} and 
the cells $\{(i, j) \mid j=0,1,2,\ldots, n-1\}$ the \emph{$i$th column}. 
For the sake of simplicity, 
we assume $u_S = (0, 0)$ and $u_R = (m-1, 0)$.

\subsection{Upper bound of $\mu_{G_8(m,n),k}$}

We derive an upper bound of $\mu_{(G_8(m,n),k)}$ by estimating the size $|{\mathcal C}_I|$
of initial configurations by Observation~\ref{O201}.
Since $u_S$ and $u_R$ are fixed when $G_8(m,n)$ is given,
instead of $N = \langle G_8(m,n), u_S, u_R \rangle$, 
we omit $u_S$ and $u_R$ to denote $\mu_{G_8(m,n),k}$.

\begin{lemma}
\label{L301}
For any $u_S$ and $u_R$,
the number of initial configurations of $G_8$ is at most $2^{6(k-1)}$. 
\end{lemma}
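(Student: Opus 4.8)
The plan is to encode each initial configuration by a bounded-length word over an eight-letter alphabet and then simply count words. Fix $u_S$ as a root. Any initial configuration $C \in {\mathcal C}_I$ is a connected set of $k$ vertices with $u_S \in C$, so $G_8[C]$ admits a spanning tree $T_C$ rooted at $u_S$. First I would perform a depth-first traversal of $T_C$ starting from $u_S$: this produces an Euler tour that traverses each of the $k-1$ tree edges exactly twice (once descending, once backtracking) and returns to $u_S$, hence consists of exactly $2(k-1)$ moves. Since every edge of $T_C$ is an edge of $G_8$, each move of the tour steps from the current vertex to one of its eight $N_8$-neighbors.

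Next I would record the tour as a word $w(C) \in \{1,\dots,8\}^{2(k-1)}$, where the $t$-th letter names which of the eight grid directions of $N_8$ the $t$-th move takes, under a fixed global ordering of those eight directions. The crucial observation is that this encoding can be inverted: given only $u_S$ and a word $w$, one replays the walk step by step and outputs the set $\rho(w)$ of vertices it visits. Because the Euler tour of $T_C$ visits every vertex of $T_C$, we have $\rho(w(C)) = C$. Thus every initial configuration lies in the image of the replay map $\rho$, so $|{\mathcal C}_I| \le |\{1,\dots,8\}^{2(k-1)}| = 8^{2(k-1)} = 2^{6(k-1)}$, using $8 = 2^3$. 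This is exactly the claimed bound, and it explains the constant $6$: three bits per move times $2(k-1)$ moves.

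Because I only need an upper bound, the cleanest route avoids any canonical choice of tree or traversal: it suffices that every $C \in {\mathcal C}_I$ is hit by \emph{at least one} word, which holds since $C$ has a spanning tree and every spanning tree has a DFS Euler tour. The replay map $\rho$ is defined on \emph{all} words of length $2(k-1)$, since every letter names a legal grid step ($G_8$ is infinite and $8$-regular), so its domain has exactly $8^{2(k-1)}$ elements and ${\mathcal C}_I$ is contained in its image. The only points needing genuine verification are that a DFS Euler tour of a $k$-vertex tree uses exactly $2(k-1)$ steps and visits precisely the tree's vertex set; both are standard facts about tree traversals. I expect no real obstacle beyond this bookkeeping, and in particular the infinite-grid assumption removes any boundary effect that could shrink the per-step alphabet below $8$, so the bound holds for all $u_S$ and $u_R$ as stated.
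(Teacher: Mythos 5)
Your proof is correct and follows essentially the same route as the paper: encode each configuration $C \in {\mathcal C}_I$ by the length-$2(k-1)$ sequence of $N_8$-direction moves arising from a depth-first (Euler) traversal of a spanning tree of $G_8[C]$ rooted at $u_S$, then bound $|{\mathcal C}_I|$ by the number $8^{2(k-1)} = 2^{6(k-1)}$ of such sequences. Your explicit treatment of the replay map $\rho$ just makes precise the recoverability step that the paper leaves implicit.
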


\begin{proof}
Let $C \in {\mathcal C}_I$ be any initial configuration. 
Consider the depth first traversal of an arbitrary 
spanning tree of $G[C]$ rooted at $u_S$. 
The track is represented by a sequence of the eight types of movements, 
$(+1, 0)$, $(+1, +1)$, $(0, +1)$, $(-1, +1)$, 
$(-1, 0)$, $(-1, -1)$, $(0, -1)$, and $(+1, -1)$. 
Thus any configuration $C \in {\mathcal C}_I$ is 
represented by a sequence of movements whose length is $2(k-1)$. 
Hence $|{\mathcal C}_I| < 8^{2(k-1)} = 2^{6(k-1)}$.
\end{proof}

\begin{corollary}
\label{C301}
For any $m, n \geq 1$, $\mu_{G_8(m,n),k} \leq 2^{6(k-1)}$.
\end{corollary}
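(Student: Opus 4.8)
The plan is to compose the two immediately preceding results, Observation~\ref{O201} and Lemma~\ref{L301}. First I would instantiate Observation~\ref{O201} with $N = \langle G_8(m,n), u_S, u_R \rangle$ to obtain $\mu_{G_8(m,n),k} \leq |{\mathcal C}_I|$, where here ${\mathcal C}_I$ denotes the set of initial (connected) configurations living inside the finite grid $G_8(m,n)$. This reduces the claim to bounding the number of such initial configurations by $2^{6(k-1)}$.

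The one small bridge to cross is that Lemma~\ref{L301} counts initial configurations of the infinite grid $G_8$, whereas here we work in the finite grid $G_8(m,n) = G_8[{\mathbb Z}_{m,n}^2]$. Since $G_8(m,n)$ is an induced subgraph of $G_8$, every connected configuration $C \subseteq {\mathbb Z}_{m,n}^2$ containing $u_S$ satisfies $G_8(m,n)[C] = G_8[C]$, so $C$ is also a connected configuration of $G_8$ containing $u_S$. Consequently the initial configurations of $G_8(m,n)$ form a subset of those of $G_8$, and their number is bounded by the infinite-grid count. Applying Lemma~\ref{L301} to the latter gives $|{\mathcal C}_I| \leq 2^{6(k-1)}$, and chaining the two inequalities yields the corollary.

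No step here is genuinely hard; the corollary is a direct consequence of the two results it follows. The only point requiring a line of justification is the induced-subgraph observation, namely that restricting the vertex set to ${\mathbb Z}_{m,n}^2$ can only remove configurations and never disturbs the connectivity of those that survive. Alternatively, one could bypass the embedding entirely and rerun the spanning-tree depth-first encoding of Lemma~\ref{L301} verbatim inside $G_8(m,n)$, since that argument uses only the eight move types available at every vertex of any 8-grid; this reproduces the bound $|{\mathcal C}_I| < 8^{2(k-1)} = 2^{6(k-1)}$ directly.
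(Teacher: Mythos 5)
Your proof is correct and follows essentially the same route as the paper: the paper's own (one-line) proof likewise combines Observation~\ref{O201} with the observation that the initial configurations of $G_8(m,n)$ form a subset of those of the infinite grid $G_8$, to which Lemma~\ref{L301} applies. You merely spell out the induced-subgraph justification that the paper dismisses as ``obvious,'' which is a reasonable elaboration rather than a different argument.
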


\begin{proof}
The number of initial configurations in  $G_8(m,n)$ is obviously
not greater than that of $N = \langle G_8, u_S, u_R \rangle$.
\end{proof}

When $n = 2$, we can obtain a better bound of $\mu_{G_8(m,2),k}$.
A trivial upper bound of $\mu_{G_8(m,2),k}$ is $3^k$,
since $|{\mathcal C}_I| \leq 3^k$
and each column of any initial configuration in ${\mathcal C}_I$
must contain at least one robot to keep the connectivity. 

\begin{lemma}
\label{L302}
If $m \geq k$,
$\mu_{G_8(m,2)} < (1 + \sqrt{2})^k$.
\end{lemma}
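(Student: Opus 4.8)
The plan is to bound $\mu_{G_8(m,2),k}$ by the number of initial configurations through Observation~\ref{O201}, and then to count those configurations exactly by exploiting the special structure of a two-row $8$-grid. First I would establish a structural characterization of the connected configurations containing $u_S=(0,0)$. Because $8$-adjacency requires $|i-i'|\le 1$, two cells lying in columns that differ by at least $2$ are never adjacent, so an empty interior column would split the configuration; hence the occupied columns must form a contiguous block, and since $u_S$ lies in column $0$ this block is exactly $\{0,1,\ldots,w-1\}$ for some width $w$. Conversely -- and this is the feature peculiar to $n=2$ -- any two occupied cells in adjacent columns are automatically $8$-adjacent, since their row indices lie in $\{0,1\}$ and thus differ by at most $1$. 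Therefore every assignment in which each column of a contiguous prefix carries at least one robot yields a connected configuration, with no further adjacency constraints surviving inside the block.

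With this characterization, counting becomes a one-dimensional transfer problem. Each column is in one of three states -- bottom only, top only, or both -- carrying $1$, $1$, or $2$ robots respectively, while column $0$ is restricted to bottom only or both, because it must contain $u_S$. Marking robots by $x$, the contribution of column $0$ is $x+x^2$ and that of every later column is $2x+x^2$, so the generating function for the number $a_k$ of initial configurations with $k$ robots is $F(x)=\frac{x+x^2}{1-2x-x^2}$. I would note that the hypothesis $m\ge k$ is exactly what guarantees this count is not truncated by the finite width: since every column holds at least one robot, the width never exceeds $k\le m$, so summing over all widths is legitimate. Reading the denominator off gives the Pell-type recurrence $a_k=2a_{k-1}+a_{k-2}$ for $k\ge 3$.

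Finally I would solve the recurrence. The characteristic equation $t^2=2t+1$ has roots $1\pm\sqrt2$, yielding the closed form $a_k=\tfrac12\big((1+\sqrt2)^k+(1-\sqrt2)^k\big)$ (which I would sanity-check against $a_1=1$, $a_2=3$, $a_3=7$). Since $|1-\sqrt2|=\sqrt2-1<1<(1+\sqrt2)^k$, the second term is strictly dominated by the first, so $2a_k=(1+\sqrt2)^k+(1-\sqrt2)^k<2(1+\sqrt2)^k$, whence $\mu_{G_8(m,2),k}\le a_k<(1+\sqrt2)^k$, as claimed. The genuinely substantive step is the structural characterization of the first paragraph: verifying that no adjacency constraint survives \emph{within} the occupied block -- so that the count factorizes column by column -- is what makes the two-row case collapse to a clean linear recurrence, and it is precisely this that sharpens the trivial $3^k$ bound to the silver-ratio growth $(1+\sqrt2)^k$. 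The boundary treatment of column $0$ and the role of the hypothesis $m\ge k$ are the only minor technical points that must be handled with care.
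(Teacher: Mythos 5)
Your proof is correct and takes essentially the same route as the paper: both bound $\mu_{G_8(m,2),k}$ by $|{\mathcal C}_I|$ (Observation~\ref{O201}) and count initial configurations column by column, exploiting that in two rows connectivity forces the occupied columns to be contiguous, which yields the Pell-type recurrence $x_k = 2x_{k-1}+x_{k-2}$ with characteristic roots $1\pm\sqrt{2}$. The only difference is bookkeeping: you track $|{\mathcal C}_I|$ directly via a generating function and obtain the clean closed form $\tfrac{1}{2}\left((1+\sqrt{2})^k+(1-\sqrt{2})^k\right)$, whereas the paper introduces the auxiliary count $n_k$ of configurations whose $0$th column is occupied (not necessarily by $u_S$) and concludes from $|{\mathcal C}_I| = n_{k-1}+n_{k-2}$.
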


\begin{proof}
Recall that $u_S = (0,0)$ and $u_R = (m-1,0)$.
Let ${\mathcal C}_I$ be the set of initial configurations of 
the swarm of $k$ robots on $G_8(m,2)$.
There are two types of configurations in ${\mathcal C}_I$:
\begin{enumerate}
 \item 
The set of configurations $C \in {\mathcal C}_I$ 
such that $(0,1) \not\in C$.
 \item 
The set of configurations $C \in {\mathcal C}_I$ 
such that $(0,1) \in C$.
\end{enumerate}
Note that $(0,0) \in C$ for any $C \in {\mathcal C}_I$ by definition.
Let $n_k$ be the number of configurations such that the $0$th column contains 
at least one robot. 
Such a configuration may not contain $(0,0)$.
We have the following recurrence formula:
\begin{equation*}
n_{k} = 2 n_{k-1} + n_{k-2},
\end{equation*}
with $n_{1} = 2$ and $n_{2} = 5$. 
We can calculate the general term $n_k$ using a standard method.
Since the two roots of equation $x^2 - 2x - 1 = 0$ are $1 \pm \sqrt{2}$,
we have two equations
\begin{eqnarray*}
n_{k} - (1-\sqrt{2})n_{k-1} &=& (1+\sqrt{2})^{k-2}(3 + 2\sqrt{2}), {\rm~~and} \\ 
n_{k} - (1+\sqrt{2})n_{k-1} &=& (1-\sqrt{2})^{k-2}(3 - 2\sqrt{2}). 
\end{eqnarray*}
Thus
\begin{equation*}
n_k = \frac{3+2\sqrt{2}}{2\sqrt{2}} (1+\sqrt{2})^{k-1} 
- \frac{3-2\sqrt{2}}{2\sqrt{2}} (1-\sqrt{2})^{k-1}.
\end{equation*}
Since $|{\mathcal C}_I| = n_{k-1} + n_{k-2}$,
\begin{eqnarray*}
|{\mathcal C}_I| &=& 
\frac{3+2\sqrt{2}}{2\sqrt{2}} \left((1+\sqrt{2})^{k-2} + (1+\sqrt{2})^{k-3} \right) 
- \frac{3-2\sqrt{2}}{2\sqrt{2}} \left((1-\sqrt{2})^{k-2} + (1-\sqrt{2})^{k-3} \right) \\ 
&=& \sqrt{2} \left(1 + \frac{3}{2\sqrt{2}} \right)(1+\sqrt{2})^{k-2}  
+ \sqrt{2} \left(1 + \frac{3}{2\sqrt{2}} \right) (1-\sqrt{2})^{k-2} \\ 
&<& \sqrt{2}(1+\sqrt{2})^{k-1} + \sqrt{2}(1+\sqrt{2}) \\ 
&<& (1+\sqrt{2})^{k-1}(1+\sqrt{2}) \\ 
&=& (1+\sqrt{2})^{k}. \\
\end{eqnarray*}

\end{proof}

\subsection{Lower bound of $\mu_{G_8,k}$}

We present an algorithm $ALG_1$ on $G_8(m,2)$ 
by which we can construct a code $\gamma$ with size $|\gamma| = 2^{k-14}$. 
Thus $2^{k-14} \leq \mu_{G_8(m,2),k} \leq \mu_{G_8(m,n),k} \leq \mu_{\langle G_8, u_S, u_R \rangle,k}$ 
hold for any $n \geq 2$.
Referring to Lemma~\ref{L302}, $ALG_1$ produces a code with a near optimal code size $2^{k-14}$.
However, the transmission delay of the code 
is not so good.
We hence propose another algorithm $ALG_2$ by which we can produce a code with 
an optimal transmission delay, in the next subsection,
although its code size is small and $2^{\lfloor k/2 \rfloor}$.

Algorithm $ALG_1$ divides the swarm into a codeword and a controller.
Roughly, the codeword is external memory to remember the codeword of the symbol transmitted,
and the controller, which is placed following the codeword,
is used to keep the external memory stable.
The controller consists of $14$ robots,
which is further divided into a buffer consisting of four robots, 
a temporal memory consisting of five robots, a signal consisting of one robot, 
and a counter consisting of four robots. 

In the canonical state,
the configuration $C$ contains exactly one robot at each column as illustrated in Figure \ref{figure:G8-m-2-control}. 
Algorithm $ALG_1$ shifts $C$ right by {\em two} (not one) columns.
Suppose that the leftmost cell of $C$ is in the column $i_s$.
Then the rightmost cell is in the column ($i_s + k - 1$).
The leftmost $(k - 14)$ robots in $C$ represents the symbol it is transmitting.
We can thus transmit 
one of $2^{k-14}$ symbols or, in other words,
this part can be regarded as memory of $(k-14)$ bits,
by associating the robot in row 0 (resp. row 1) with 0 (resp. 1).
The configuration in Figure \ref{figure:G8-m-2-control} 
is thus transmitting a binary sequence 011010100110. 

We regard the swarm as a sequence of arrays whose element can take $0$ or $1$;
the codeword array $code[0..k-15]$, 
followed by the buffer array $bf[0..4]$, the temporal memory array $cp[0..4]$, 
the signal array $sign[0]$, and the counter array $cnt[0..3]$ (Figure \ref{figure:G8-m-2-control}). 

\begin{figure}[tbp] 
 \centering
 \includegraphics[width=8cm]{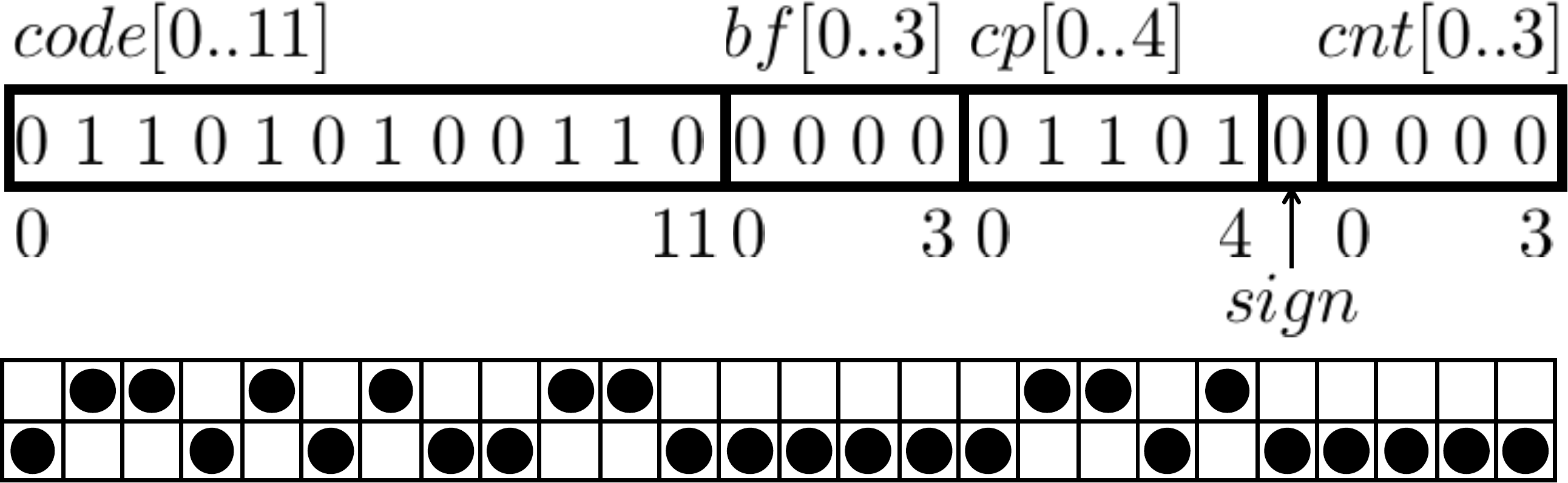}
 \caption{Illustration of Algorithm $ALG_1$ for $G_8(m,2)$. 
It illustrates a configuration in the canonical state.
The array top is an array representation of the configuration bottom.} 
 \label{figure:G8-m-2-control} 
\end{figure}

\begin{figure}[tbp] 
 \subfloat[Configuration carrying $011010100110$ after setting cp. ]{
 \includegraphics[height=0.5cm]{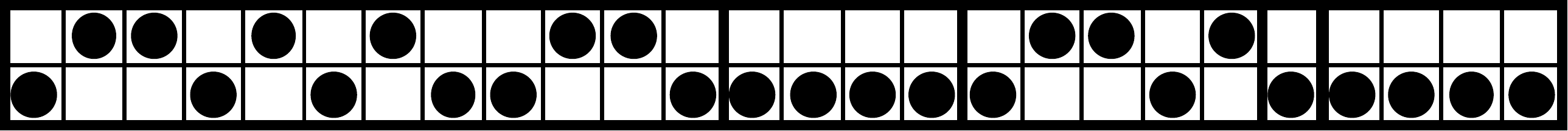}
 } \hspace{3mm}
 \subfloat[Initial wave is set. ]
 {\includegraphics[height=0.5cm]{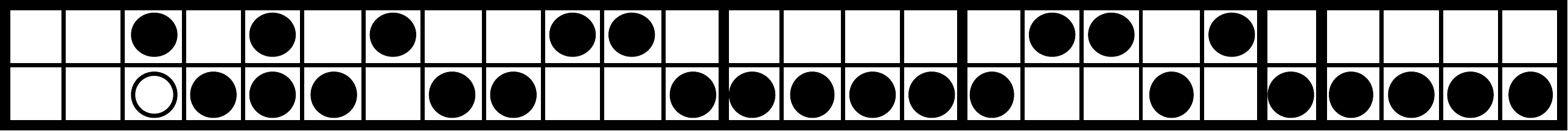}
 }\\ 
 \subfloat[First bit $b_1 = 0$ is created by the move of correct robot.]{
 \includegraphics[height=0.5cm]{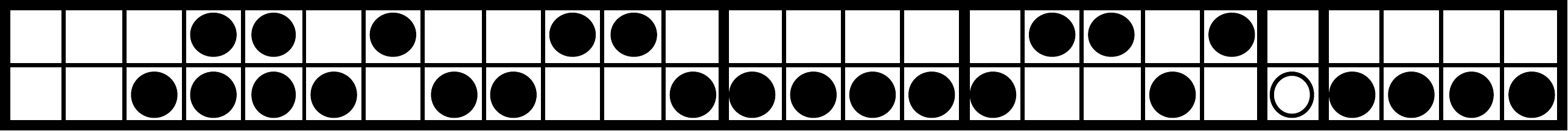}
 }\hspace{3mm}
 \subfloat[Signal is set to $1$.]{
 \includegraphics[height=0.5cm]{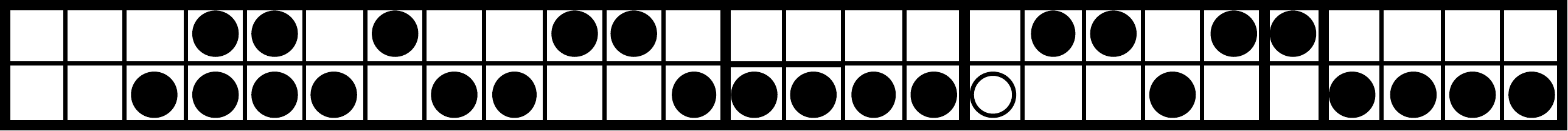}
 }\\ 
 \subfloat[The first bit of the copy is updated.]{
 \includegraphics[height=0.5cm]{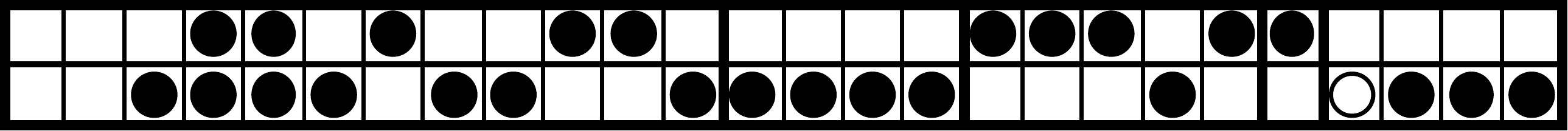}
 }\hspace{3mm}
 \subfloat[Counter is incremented. ]{
 \includegraphics[height=0.5cm]{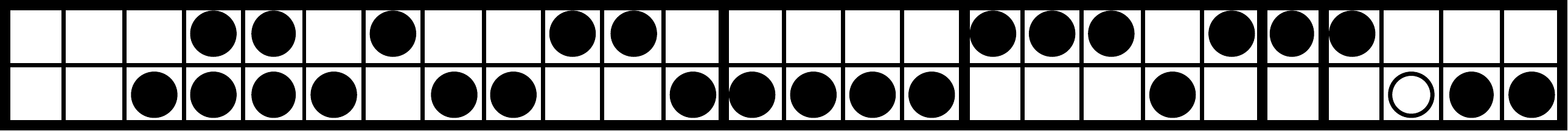}
 }\\ 
 \subfloat[Counter is incremented. ]{
 \includegraphics[height=0.5cm]{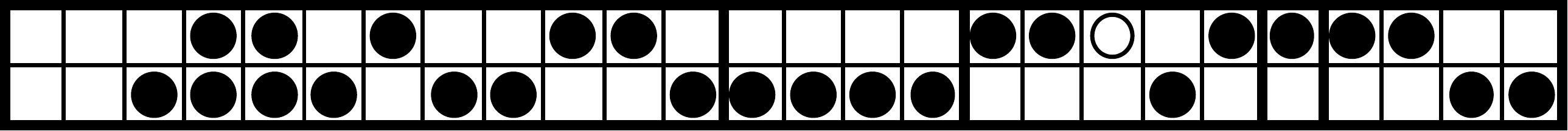}
 }\hspace{3mm}
 \subfloat[The third bit of the copy is updated. ]{
 \includegraphics[height=0.5cm]{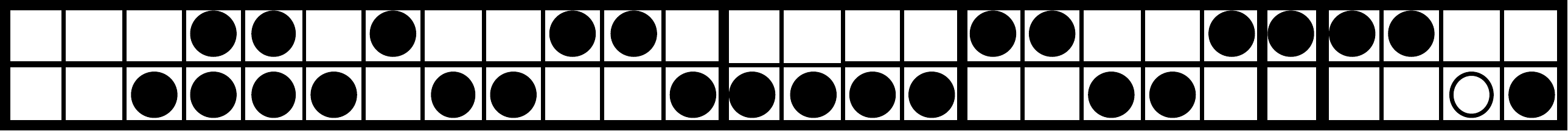}
 }\\ 
 \subfloat[Counter is incremented.]{
 \includegraphics[height=0.5cm]{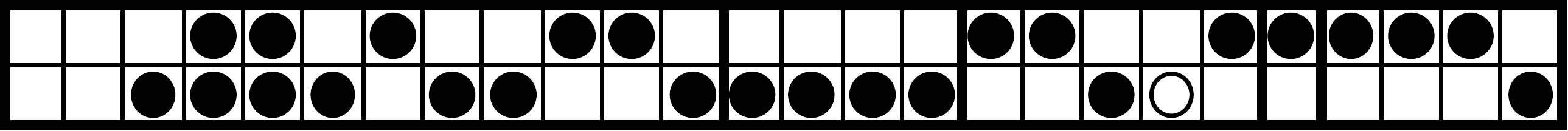}
 }\hspace{3mm}
 \subfloat[The fourth bit of the counter is updated.]{
 \includegraphics[height=0.5cm]{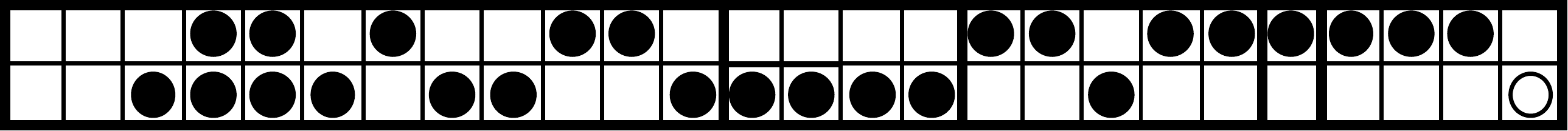}
 }\\ 
 \subfloat[Counter is incremented. ]{
 \includegraphics[height=0.5cm]{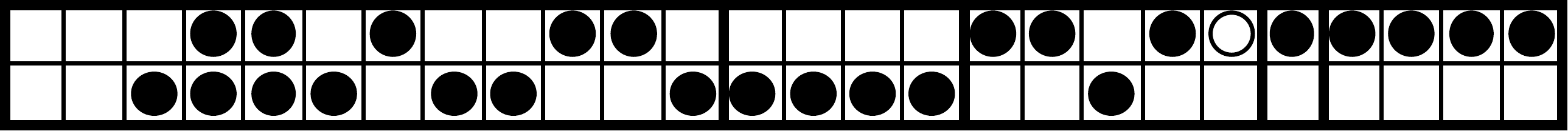}
 }\hspace{3mm}
 \subfloat[The last bit of the copy is updated. ]{
 \includegraphics[height=0.5cm]{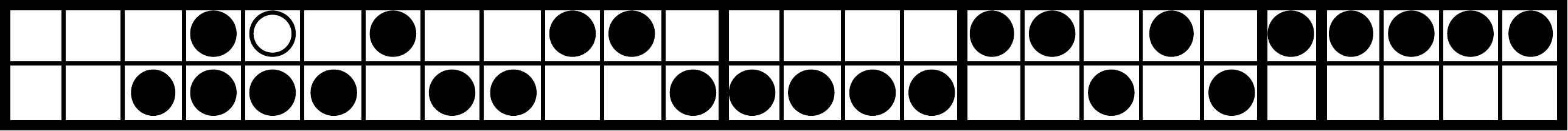}
 }\\ 
 \subfloat[Head moves. ]{
 \includegraphics[height=0.5cm]{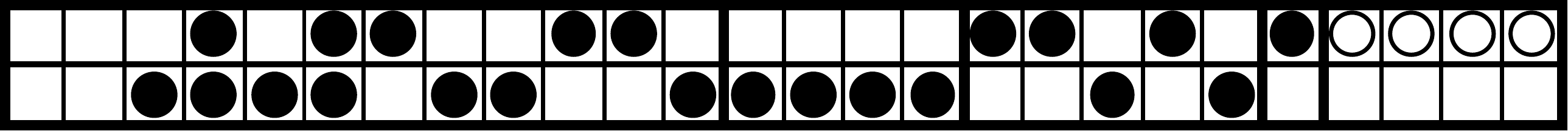}
 }\hspace{3mm}
 \subfloat[Counter is reset. ]{
 \includegraphics[height=0.5cm]{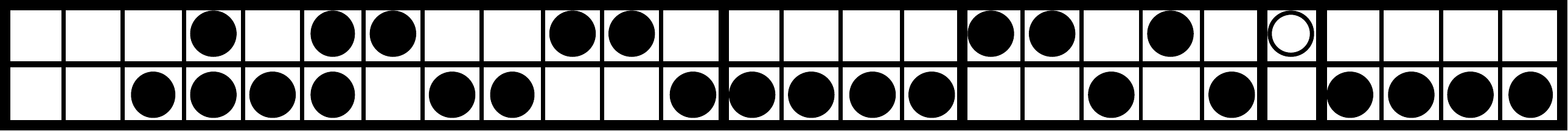}
 }\\     
 \subfloat[Signal is reset.]{
 \includegraphics[height=0.5cm]{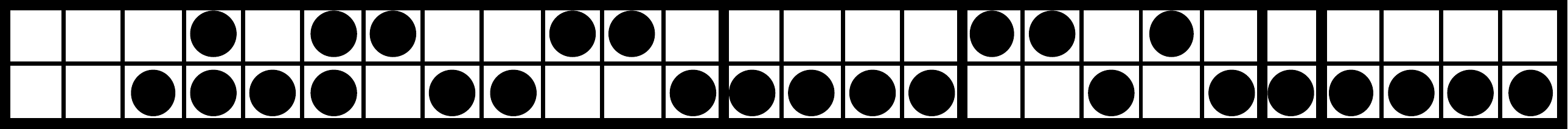}
 }
\caption{Illustrations of a right shift by 2 columns realized under Algorithm $ALG_1$. 
Only its core part is illustrated.
A robot represented by a black circle is stationary,
while a one represented by a white circle moves to yield a new configuration. 
The controller and the wave are synchronized by the values of $(head-tail)$ and $sign$. }
 \label{figure:shift}
\end{figure}

\begin{figure}[tbp] 
 \subfloat[][Final bit of the code $011010100110$ is fixed.]{ 
    \includegraphics[height=0.5cm]{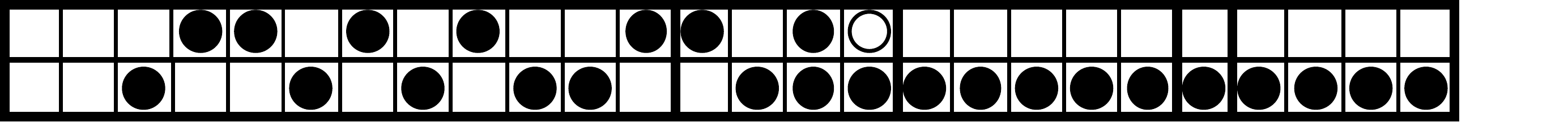}
 }\hspace{3mm} 
 \subfloat[][The head of the wave moved.]{ 
 \includegraphics[height=0.5cm]{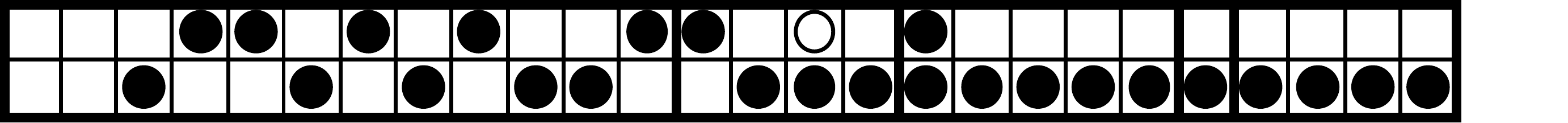}
 }\\ 
 \subfloat[][The tail of the wave follows the head.]{ 
 \includegraphics[height=0.5cm]{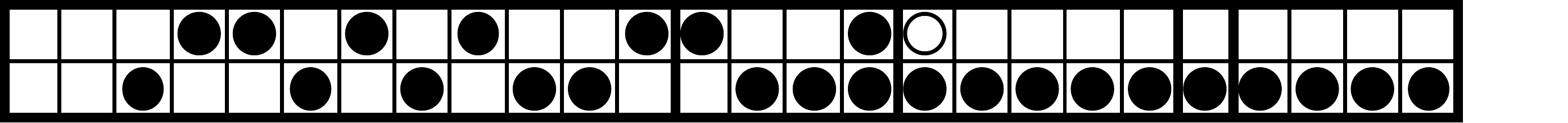}
 }\hspace{3mm}
 \subfloat[][The head and the tail repeat these movements.]{ 
 \includegraphics[height=0.5cm]{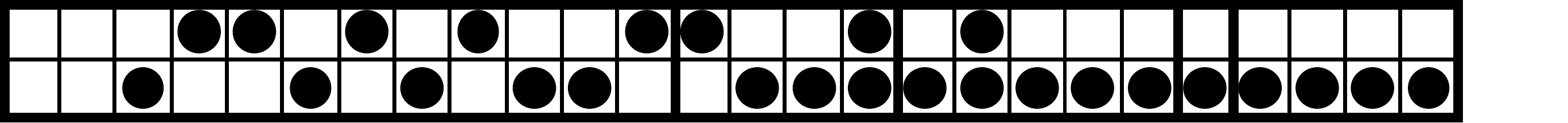}
 }\\ 
 \subfloat[][The head and the tail reach the right end of the swarm.]{ 
 \includegraphics[height=0.5cm]{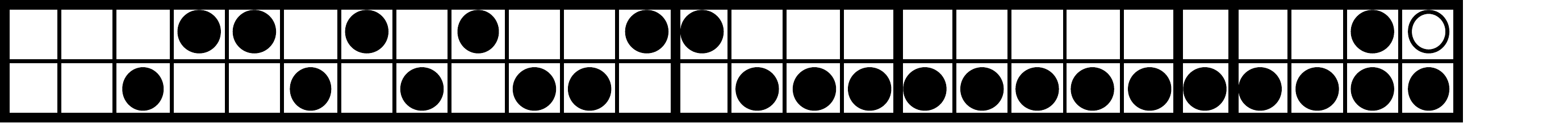}
 }\hspace{3mm}
 \subfloat[][The head of the wave moves to the $0$th row.]{ 
 \includegraphics[height=0.5cm]{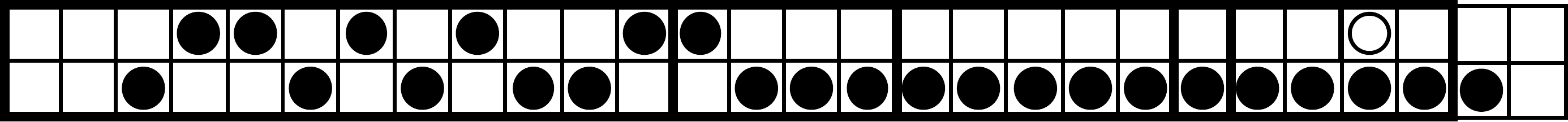}
 }\\ 
 \subfloat[][The tail of the wave moves till the end of the swarm.]{
 \includegraphics[height=0.5cm]{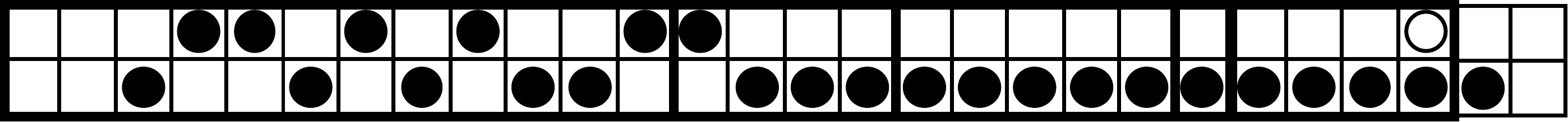}
 }\hspace{3mm}
 \subfloat[][The tail of the wave moves till the end of the swarm.]{
 \includegraphics[height=0.5cm]{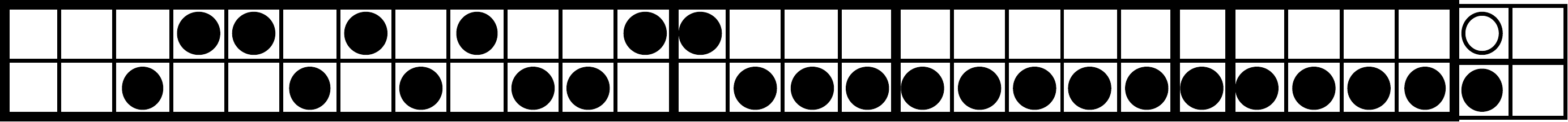}
 }\\ 
 \subfloat[][The swarm retains the canonical shape shifted by two columns.]{ 
 \includegraphics[height=0.5cm]{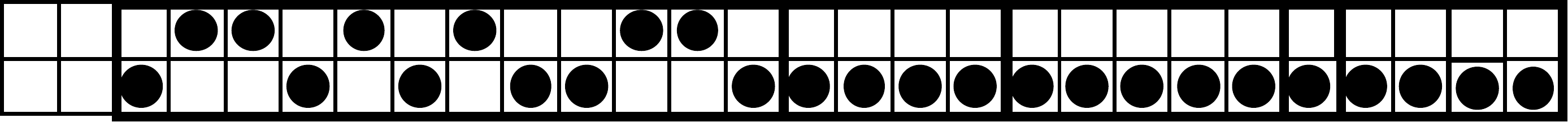}
 }
 \caption{Illustrations of the final procedure of a shift in $ALG_1$. After the final two bits are fixed, the wave go through the controller and the swarm retains the canonical state.}
\label{figure:shift-final}
\end{figure}

Starting from a configuration in the canonical state,
$ALG_1$ changes the bits in $code[0..k-1]$ one by one,
and shifts the swarm right by two columns 
when the configuration returns to the canonical state.
Here it is worth noting that a bit may be changed more than once.

First, $cp[0..4]$ copies $code[0..4]$ and 
two robots in $code[0..1]$ move to make a ``wave''. 
There are at most two columns containing two robots during the shift procedure.
When there are two such columns,
the right one is called the \emph{head} (column of the wave) 
and the left one the \emph{tail} (column of the wave). 
The shift procedure shifts a bit from left to right,
and the wave indicates the bits that $ALG_1$ is currently shifting.
However, when the wave is created, 
the bits maintained in the head and the tail are lost.
Array $cp$ is used to save these bits.
When the shift of the codeword finishes,the wave is in the controller,
and the two robots in the head and the tail are simply sent to the right end 
of the configuration to reset the controller to all zero.

The core of the shift procedure is to shift a bit in codeword right by two columns.
Let $head$ (resp. $tail$) denote the $x$-coordinate of the head (resp. the tail).
The shift is indeed carried out at the tail.
The part of the configuration to the left of the tail always remembers
a prefix $H = b_1 b_2 \ldots b_h$ of the bit sequence $B = b_1 b_2 \ldots b_{k-14}$ transmitting.
To extend $H$ by appending $b_{h+1}$,
$ALG_1$ moves the correct robot in the tail to right
so that the correct bit $b_{h+1}$ is created in the tail.
As a matter of fact, the tail has already moved to right, though.
The shift procedure is a repetition of the following phases: 

\begin{enumerate}
\item
The controller updates $cp[0..4]$ to $b_{h+1} b_{h+2} \ldots b_{h+5}$.
\item 
The correct robot in the tail moves to the next column to create bit $b_{h+1}$, 
by referring to $cp[0]$ to decide the correct robot.
This move of the robot shifts the tail right by one column.
\end{enumerate}

To implement the above procedure,
the update and the wave shift phases have to be synchronized.
Although the second phase finishes in one step,
the first phase needs a number of steps to complete, 
and the update phase must be finished before the shift of tail starts.
Algorithm $ALG_1$ keeps inequality $head-tail \leq 2$ during the procedure.
The end of phase 2 is indicated by $tail = head-1$,
and roughly until the following phase 1 finishes,
a robot in the head does not move and hence equation $tail = head-1$ holds.
When the update of $cp[0..4]$ in phase 1 finishes with cooperation of $cnt$,
a robot in the head 
moves to the next column.
When it moves, equation $tail = head-2$ holds,
which however is not a sign to finish 
phase 1 (and to start phase 2),
since $cnt$ must be reset before starting phase 2.
The completion of the reset is signaled by $sign$.
It is set once $tail = head-1$ holds.
It is reset, when $tail = head-2$ holds and $cnt$ is reset to all 0.

The update of $cp$ is carried out in the following way:
Since $cp$ stores $b_{h+1} b_{h+2} \ldots b_{h+5}$,
$ALG_1$ updates it to $b_{h+2} b_{h+3} \ldots b_{h+6}$.
For $i = 1, 2, 3$, in this order, 
$cp[i]$ is copied to $cp[i-1]$,
and finally it stores $b_{h+6}$ to $cp[4]$,
where $b_{h+6}$ is stored to the right of the head and hence it is possible.
The synchronization of these five updates are done with the cooperation of $cnt$.
Here $cnt[0..3]$ works as a unary counter, 
and it is incremented whenever an update finishes 
(or the update is not necessary because it has already had the correct bit value).
Once all bits in $cnt$ are set to 1, as explained, the head starts shifting.

Finally, the buffer $bf[0..3]$ is used when $ALG_1$ is going to shift
two bits $b_{k-15}$ and $b_{k-14}$ which were stored in the last two columns.
These column are now the tail and the head, 
and the bits are stored in $cp[0..1]$.
The buffer is used to store $b_{k-15}$ and $b_{k-14}$.
It is also necessary to separate $code$ and $cp$, 
so that the wave will not directly shift into $cp$ and the bits will not be lost.

In the following description of Algorithm $ALG_1$,
$cnt$ increments as $0000 \to 1000 \to 1100 \to 1110 \to 1111$,
and the counter value is $c$ when the number of $1$'s in $cnt[0..3]$ is $c$. 

\noindent{\bf Algorithm $ALG_1$}
\begin{enumerate}
\item 
When the configuration is canonical,
first store $b_{1} b_{2} \ldots b_{5}$ to $cp$,
then move the robots in the first and the second column to the third and the fifth
column to initialize the wave where $head = 3$ and $tail = 1$, i.e., $head - tail = 2$ 
holds after the moves.

\item
When there are no columns in $code$ that contain two robots,
but there are head and tail of the wave, 
move the two robots to right and append to the right end of the configuration,
which move reset the controller all 0,
and the configuration returns to the canonical state.

\item
When the wave is in $code$ and $sign=0$
(Note that $cnt = 0000$ always holds in this case.):
\begin{enumerate}
\item 
If $head-tail = 2$, referring to $cp[0]$,
the correct robot in the tail moves to an empty cell of the $(tail+1)$st column.
\item 
If $head-tail = 1$, set $sign$ to $1$. 
\end{enumerate}
\item 
When the wave is in $code$ and $sign = 1$:
\begin{enumerate}
\item 
If $cnt = c < 4$,
copy 
$cp[c+1]$ to $cp[c]$ if they are different.
Then increment $cnt$.
\item 
If $cnt = c = 4$, 
copy 
$code[head+1]$ to $cp[4]$ if they are different.
(As a result, $cp[4] = code[head+1]$ holds.)
If $head-tail = 1$,
one of the two robots in the head moves to an empty cell of column $head+1$. 
If $head-tail = 2$, first reset $cnt$ to 0,
and then reset $sign$ to 0 (when $cnt = 0$).
\end{enumerate}
\end{enumerate}

\begin{lemma}
\label{L303}
There is a code $\gamma$ based on algorithm $ALG_1$
whose size is $|\gamma| = 2^{k-14}$.
The transmission delay $d_{ALG_1}$ is at most $(10k-123.5)(m - k)$.   
\end{lemma}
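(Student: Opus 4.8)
The statement has two independent halves --- the code size $|\gamma| = 2^{k-14}$ and the delay bound --- and both rest on a single correctness invariant for $ALG_1$ that I would prove first.

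\textbf{Code size.} The plan is to exhibit $2^{k-14}$ distinct codewords directly. For each bit string $B = b_1 b_2 \cdots b_{k-14} \in \{0,1\}^{k-14}$ let $C_0^B \in {\mathcal C}_I$ be the canonical configuration whose leftmost column is column $0$, whose $i$-th codeword robot occupies row $b_i$, and whose $14$ controller robots are in their all-zero layout. Assigning a symbol $s_B$ to each $B$ with $\gamma_I(s_B) = C_0^B$ and $\gamma_T(s_B) = \{\tau_{ALG_1}(C_0^B)\}$, the $C_0^B$ are pairwise distinct by construction, so the only thing to check is that the terminal configurations $\tau_{ALG_1}(C_0^B)$ are pairwise distinct (and $\neq \bot$). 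This is exactly what the invariant below supplies: since $B$ is recoverable from the terminal configuration, the map $B \mapsto \tau_{ALG_1}(C_0^B)$ is injective, giving a code of size $2^{k-14}$ realized by $ALG_1$ (equivalently $\mu_{ALG_1} \ge 2^{k-14}$, so Theorem~\ref{obs:1}(1) applies). Termination ($\tau_{ALG_1}(C_0^B) \neq \bot$) is immediate because $m$ is finite and each return to the canonical state advances the swarm two columns to the right, so a rightmost (controller) robot eventually occupies $u_R$.

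\textbf{The invariant, which is the main obstacle.} I would formulate the invariant as: starting from any canonical configuration whose codeword spells $B$, $ALG_1$ passes through a deterministic sequence of configurations and next returns to a canonical configuration whose codeword again spells $B$, with the leftmost column advanced by exactly two. All the difficulty is concentrated here, and I would prove it by induction over the phases of one shift-by-two, tracking the finite control state $(head - tail, sign, cnt)$. The claims to verify are: (i) the wave always satisfies $head - tail \le 2$, so at most two columns hold two robots and no move ever disconnects the configuration; (ii) at the instant the tail is about to write $b_{h+1}$, the array $cp$ correctly holds $b_{h+1}\cdots b_{h+5}$, so the ``correct robot'' rule of Steps~3(a) and~4(b) reproduces the intended bit and never overwrites an already-written one; (iii) $sign$ and $cnt$ synchronize the multi-step refresh of $cp$ with the single-step advance of the tail, i.e.\ the tail never advances before $cp$ is refreshed and $cnt$ has been reset, and the buffer $bf$ shields the last two bits $b_{k-15}, b_{k-14}$ from being absorbed into $cp$. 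Matching this case analysis against Figures~\ref{figure:shift} and~\ref{figure:shift-final} is the technical heart of the proof.

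\textbf{Delay.} Once correctness is established the delay is a bounded counting argument. A robot reaches $u_R = (m-1,0)$ only after the leftmost column has advanced from $0$ to $m-k$, and each shift-by-two advances it by two, so there are $(m-k)/2$ shifts. Within one shift the wave sweeps the $k$ columns of the swarm, and processing each column costs a bounded number of steps --- one tail move, setting $sign$, at most four $cp$-copies interleaved with four $cnt$-increments, one head move, and the reset of $cnt$ and $sign$ --- together with the constant wave-creation and controller-reset steps of Steps~1 and~2. Summing the worst-case per-column costs (taking every conditional copy to occur, so that the total is a genuine upper bound) and accounting for the controller columns gives a per-shift cost of at most $20k - 247$ steps. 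Multiplying by $(m-k)/2$ yields $d_{ALG_1} \le (10k - 123.5)(m-k)$, as claimed.
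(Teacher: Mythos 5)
Your proposal is correct and takes essentially the same route as the paper: the code size is read off from the $2^{k-14}$ canonical initial configurations (the paper simply declares this ``obvious from the definition of $ALG_1$'', while you spell out the injectivity/invariant argument that claim tacitly relies on), and your delay bound is the paper's own counting rearranged --- a per-shift cost of $20k-247$ moves over $(m-k)/2$ shifts is exactly the paper's per-column cost $\frac{2k + (18k-252) + 5}{2} = 10k-123.5$ multiplied by the $(m-k)$ columns to traverse.
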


\begin{proof}
The fact $|\gamma| = 2^{k-14}$ is obvious from the definition of algorithm $ALG_1$.

As for the transmission delay,
one cycle of transitions shifts a configuration in the canonical state right by two columns.
During the movement, 
the wave moves from the first two bits of the codeword to the end of the counter. 
Thus the total number of shifts of the wave is $2k$. 
For each bit of the codeword, 
each bit of the counter moves twice, the signal twice,  
and each bit of the copy part at most once. 
Hence, the total number of robot moves in the controller is $(k-14)(5+2+8) = 18k-252$. 
In the beginning of the shift, 
each bit of the copy part moves at most once. 
Hence the total number of robot moves to move the codeword by one column is at most
\begin{equation*}
\frac{2k + (18k-252) + 5}{2} = 10k-123.5.
\end{equation*}
The distance from the last bit of the counter to $u_R$ is at most $(m - k)$. 
Hence $d_{ALG_1} \leq (10k-123.5)(m-k)$.
\end{proof}

An obvious consequence of Lemma~\ref{L303} is that
$2^{k-14} \leq \mu_{G_8(m,2),k} \leq \mu_{G_8(m,n)}$ for all $n \geq 2$.
The transmission delay of the code by $ALG_1$ is roughly $10k(m-k)$ when $m (\gg k)$ is large,
which is not so fast compared with its lower bound $k(m-2(k-1))$ of Theorem~\ref{T204}.
This fact motivates the study of a faster algorithm in the next subsection.

\subsection{Algorithm $ALG_2$ for faster information transmission}

It is easy to design a code with a small transmission delay 
if we do not need to care the code size.
In this subsection, we design a code based on an algorithm $ALG_2$ 
whose code size is $2^{\lfloor k/2 \rfloor}$.
However its transmission delay is near optimal.

For simplicity, we assume that $k$ satisfies 
$\lfloor k/2 \rfloor = \ell$ 
for some positive integer $\ell \geq 2$. 
Algorithm $ALG_2$ divides the swarm into a codeword and a copy,
each of which consists of $\ell$ robots. 
In the same way as $ALG_1$, 
the codeword represents $\ell$ bits and the copy follows the codeword. 
When $k$ is odd, a single robot follows the copy,
and it does not represent any bit.

In the canonical state, 
each column contains exactly one robot as illustrated in 
Figure~\ref{fig:shift-by-1}. 
Algorithm $ALG_2$ shifts the canonical state by one column. 
Suppose that the leftmost cell of the canonical state is in the 
column $i$. 
Then the rightmost cell is in the column $(i+k-1)$. 
The left $\ell$ robots represent the symbol it is transmitting. 
We can thus transmit one of $2^{\ell}$ symbols, or 
in other words, these $\ell$ robots can be regarded as memory of $\ell$ bits 
$B =b_1 b_2 \ldots b_{\ell}$. 
We now regard the robots in the left $\ell$ columns as an array 
$code[0..\ell-1]$ and 
those in the right $\ell$ columns as an array $copy[0..\ell-1]$. 
When $k$ is odd, the last robot is put in the $0$th row of 
column $i+k-1$.
The canonical state in Figure~\ref{fig:shift-by-1} represents
a binary sequence $010$. 

Starting from a configuration where the swarm is in the canonical state, 
$ALG_2$ shifts the swarm right by one column when the configuration returns 
to the canonical state. 
In the canonical state, $ALG_2$ first shifts the codeword by 
generating a wave; 
the leftmost robot moves to the empty cell of its right column. 
From now on, we regard the column with two robots as the wave.\footnote{
The wave in $ALG_2$ consists of just one column and this is different from 
the wave in $ALG_1$, that consists of two columns for synchronization.} 
Once the wave is initialized, the wave moves to right 
by repeating the following procedure; 
when the wave is in the $j$th column from the left end of the swarm, 
one of the two robots in the wave moves to the empty cell of the $(j+1)$st column 
so that the $j$th column represents $b_{j}$, which is stored at 
$copy[j-1]$. 
After the movement, the wave is in the $(j+1)$st column and 
the left $j$ columns of the swarm represent $b_1 b_2 \ldots b_j$. 
When the wave reaches the copy, it repeats the same procedure 
until it reaches the right end of the swarm. 
Specifically, when the wave reaches the column of $copy[0]$ 
(i.e., $\ell$th column from the left end of the swarm), 
$copy[\ell-1]$ specifies which of the two robots in the wave 
move to which cell because $\ell \geq 2$. 
Now, $B$ is fully stored in the left $\ell$ columns of the swarm and 
the wave proceeds to right with copying these columns. 
When $k$ is odd, the last robots moves to the $0$th row of the right 
end of the swarm. 
Then, the swarm retains the canonical state. 

\begin{figure}[tbp] 
   \centering 
    \includegraphics[width=3cm]{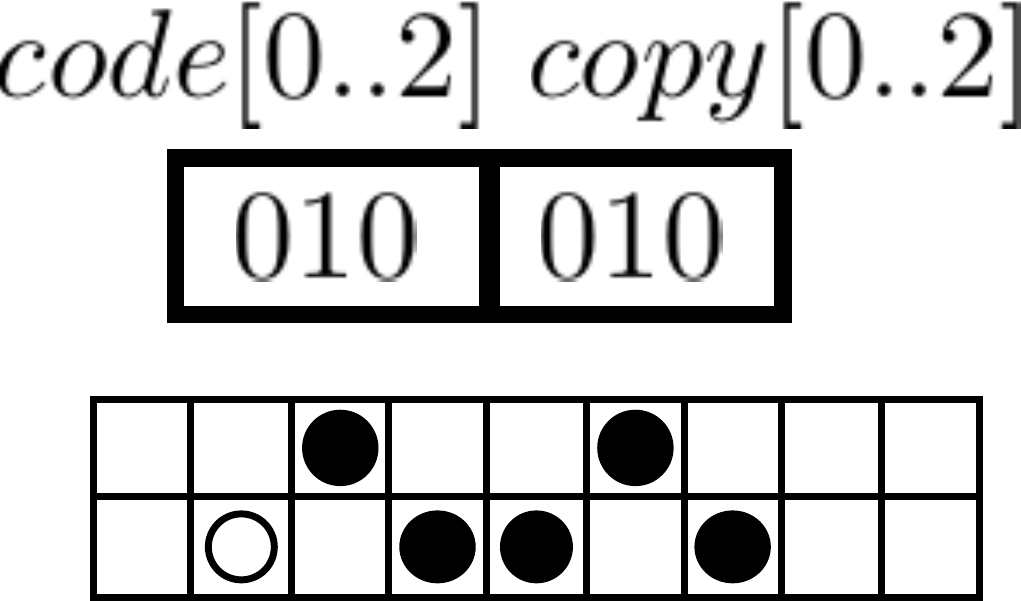}
    \caption{Codeword for $010$ in $ALG_2$. } 
 \label{fig:structureA2}
\end{figure}
\begin{figure}[tbp]
     \centering 
     \includegraphics[width=14cm]{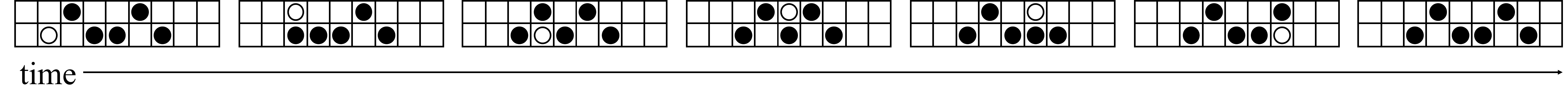} 
 \caption{Illustration of a right shift by one column realized by $ALG_2$. 
A robot represented by a black circle is stationary, while 
a one represented by a white circle moves and yields a new configuration.}
 \label{fig:shift-by-1}
\end{figure}

\begin{lemma}
\label{L331}
There is a code $\gamma$ based on algorithm $ALG_2$ whose size is $|\gamma|=2^{\lfloor k/2 \rfloor}$. 
The transmission delay $d_{ALG_2}$ is $k(m-k)$. 
\end{lemma}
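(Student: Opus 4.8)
The plan is to reduce both claims to a single \emph{shift invariant}: if the swarm is in the canonical state occupying columns $i, i+1, \ldots, i+k-1$ and the array $code[0..\ell-1]$ encodes a string $B = b_1 b_2 \cdots b_\ell$, then one complete cycle of $ALG_2$ (wave generation followed by rightward propagation) returns the swarm to the canonical state occupying columns $i+1, \ldots, i+k$, with $code[0..\ell-1]$ still encoding the same $B$, and consumes exactly $k$ robot moves. I would prove this invariant first, by induction on the position of the wave, tracking the configuration column by column.

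For the code size I would use the invariant to show that distinct codewords stay distinguishable. For each $B \in \{0,1\}^\ell$ let $C_0(B) \in {\mathcal C}_I$ be the canonical initial configuration at the sender (leftmost column $0$) encoding $B$; the map $B \mapsto C_0(B)$ is clearly injective. By the invariant, the behavior of $ALG_2$ from $C_0(B)$ never stalls and preserves $B$ in its leftmost $\ell$ columns throughout, so after $m-k$ shifts it reaches the canonical terminal configuration occupying columns $m-k, \ldots, m-1$, which still encodes $B$. Since configurations for distinct $B$ differ in their leftmost $\ell$ columns, $\tau_{ALG_2}(C_0(B)) \neq \tau_{ALG_2}(C_0(B'))$ for $B \neq B'$ and all these values differ from $\bot$. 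Hence $\mu_{ALG_2} \geq 2^{\ell} = 2^{\lfloor k/2 \rfloor}$, and Theorem~\ref{obs:1}(1) supplies a realizable code $\gamma$ of exactly this size.

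For the transmission delay I would count the moves of one cycle directly from the invariant. Wave generation, in which the leftmost robot advances one column, is a single move; the subsequent propagation advances the wave through the remaining columns, one robot move per advance, for $k-1$ further moves, so each cycle costs $k$ moves. Because $u_S = (0,0)$ and $u_R = (m-1,0)$, the swarm must travel from leftmost column $0$ to leftmost column $m-k$, which takes exactly $m-k$ cycles, and no terminal configuration can appear sooner since column $m-1$ is first created only on the last move of the final cycle. Multiplying yields $d_{ALG_2} = k(m-k)$, as claimed.

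The main obstacle is verifying the shift invariant rigorously, i.e., that the informal wave description of Figure~\ref{fig:shift-by-1} is a legal execution under the model of Section~\ref{sec:preliminary}. I must check that at each intermediate step exactly one robot moves, that the resulting two-row configuration remains connected, and that the reproduced bit in each column equals the intended $b_j$. The delicate point is the copy array: as the wave overwrites $code[j-1]$, the choice of which robot in the wave advances is dictated by reading $b_j$ from $copy[j-1]$, and the hypothesis $\ell \geq 2$ is exactly what guarantees $copy[\ell-1]$ is still intact when the wave enters $copy[0]$. I would also have to treat the odd-$k$ case, where the trailing unused robot is carried to the $0$th row of the new rightmost column, and confirm that this neither breaks connectivity nor alters the per-cycle move count; the precise row occupied by the rightmost robot, hence the exact moment $u_R = (m-1,0)$ is first covered, is the one point where the boundary must be inspected with care.
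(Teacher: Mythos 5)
Your proposal follows essentially the same route as the paper's own proof: the paper likewise argues that one cycle shifts the canonical state right by one column using exactly $k$ moves (one robot move per column as the wave passes), multiplies by the $m-k$ cycles to obtain $d_{ALG_2} = k(m-k)$, and treats the code size $2^{\lfloor k/2 \rfloor}$ as immediate from the fact that the $\ell = \lfloor k/2 \rfloor$ codeword columns are preserved and remain distinguishable at the receiver. The paper's version is far terser --- it does not spell out the shift invariant, the injectivity of $B \mapsto \tau_{ALG_2}(C_0(B))$, or the boundary behavior at column $m-1$ that you rightly flag as the delicate point --- so your write-up is a more rigorous rendering of the same argument rather than a genuinely different one.
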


\begin{proof}
The fact that $|\gamma|=2^{\lfloor k/2 \rfloor}$ is obvious from the definition 
of algorithm $ALG_2$. 

As for the transmission delay, 
one cycle of transmission shifts the canonical state right by one column. 
During the movement, the wave moves from the left end of the swarm 
to the right end 
with making one robot move in each column. 
Thus the total number of movement in a cycle is $k$. 
Consequently, $d_{ALG_2} = k(m-k)$. 
\end{proof}

By comparing two transmission delays in Lemmas\ref{L303} and \ref{L331},
when $m (\gg k)$ is large, $ALG_2$ is 10 times as fast as $ALG_1$.

\section{Variable swarm size code} 
\label{sec:varsizecode}

The information transmission scheme in Introduction allow us 
to assign configurations of different sizes to different 
symbols.
Suppose that, in a code $\gamma$,
a configuration with size $k_i$ is assigned to a symbol $s_i$ 
whose occurrence probability is $s_i$.
Let $K_{\gamma} = \sum_{i=1}^{\alpha} k_i p_i$ be the average swarm size of $\gamma$
and define $K^* = \min_{\gamma} K_{\gamma}$,
which is the optimal average swarm size.
This section analyzes bounds of $K^*$.  

\subsection{Lower bounds of $K^*$}

We derive a lower bound of $K^*$ on $G_8(m,n)$.
We can assume without loss of generality that 
$p_i \leq p_{i+1}$ for $i = 1, 2, \ldots , \alpha - 1$ 
and $k_i \leq k_{i+1}$ for $i = 1, 2, \ldots , \alpha - 1$,
since, if $k_i > k_{i+1}$ holds, 
we could reduce $K^*$ 
by exchanging the assignment to $s_i$ and $s_{i+1}$.
The size of any code realized by a swarm of $k$ robots on $G_8(m,n)$
is bounded from above by $2^{6(k-1)}$,
i.e., $\mu_{G_8(m,n)} \leq 2^{6(k-1)}$ by Lemma~\ref{L301}.
A lower bound of $K^*$ can be achieved by a code constructed
under the assumption that $2^{6(k-1)}$ configurations of size $k$ 
are assignable to symbols for each $k$.
Let $H(S) = - \sum_{i = 1}^{\alpha} p_i \log p_i$ be the entropy of $S$. 

\begin{theorem}
\label{T401}
Suppose that $m > k$ and $n\geq 2$.
Then for any code for $S$ on $G_8(m,n)$,
\begin{equation*}
 K ^* > \frac{1}{6} H(S) + 1 - \frac{1}{6} \log \lceil \log \frac{63 \alpha+1}{6} \rceil.
\end{equation*}
\end{theorem}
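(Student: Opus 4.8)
The plan is to recast this as a source-coding lower bound over an effective ``alphabet'' of size $2^6 = 64$, driven by the counting estimate of Lemma~\ref{L301} together with Shannon's (Gibbs') inequality. By Lemma~\ref{L301} there are at most $2^{6(k-1)} = 64^{\,k-1}$ initial configurations of size $k$, so the swarm size $k_i$ assigned to $s_i$ plays the role of the length of a codeword over a $64$-letter alphabet, and $K_\gamma = \sum_i p_i k_i$ plays the role of expected codeword length. Fixing an arbitrary code $\gamma$, I would set $Z = \sum_{i=1}^{\alpha} 64^{-(k_i-1)}$ and $q_i = 64^{-(k_i-1)}/Z$. Then $(q_i)$ is a genuine probability distribution, so Shannon's lemma ($D(p\|q) \ge 0$) gives
\begin{equation*}
H(S) = -\sum_i p_i \log p_i \;\le\; -\sum_i p_i \log q_i \;=\; 6\sum_i p_i(k_i-1) + \log Z \;=\; 6(K_\gamma - 1) + \log Z,
\end{equation*}
using $\log_2 64 = 6$ and $\sum_i p_i = 1$. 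Rearranging yields $K_\gamma \ge \tfrac16 H(S) + 1 - \tfrac16 \log Z$, so the entire problem reduces to a uniform upper bound on the normalizing constant $Z$.

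The key step is therefore to bound $Z$ over all codes. Since the configurations $\gamma_I(s_i)$ are pairwise distinct, the count $a_j = |\{i : k_i = j\}|$ satisfies $a_j \le 64^{\,j-1}$ by Lemma~\ref{L301}, while $\sum_j a_j = \alpha$. Hence each size class contributes at most $a_j\, 64^{-(j-1)} \le 1$ to $Z = \sum_j a_j\, 64^{-(j-1)}$, and an exchange argument (the weights $64^{-(j-1)}$ decrease in $j$, so the mass $\alpha$ is best placed in the smallest classes first, which is exactly the minimizing assignment the paper normalizes to) shows $Z$ is maximized by filling the size classes $1, 2, \ldots$ to capacity. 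Because classes $1, \ldots, t$ jointly hold $\sum_{j=1}^{t} 64^{\,j-1} = \tfrac{64^{t}-1}{63}$ configurations, all $\alpha$ symbols fit once $\tfrac{64^{t}-1}{63} \ge \alpha$, i.e.\ $64^{t} \ge 63\alpha + 1$. Writing $t^\ast = \lceil \log_{64}(63\alpha+1) \rceil$ for the least such $t$, filling the first $t^\ast - 1$ classes fully and the last one partially gives $Z \le t^\ast$; the identity $63 = 2^6 - 1$ is exactly what produces the constant $63\alpha + 1$ in the statement, and $\log_{64} = \tfrac16 \log$ supplies the factor $6$. Substituting into the previous bound gives
\begin{equation*}
K^* \;\ge\; \tfrac16 H(S) + 1 - \tfrac16 \log t^\ast,
\end{equation*}
which is the stated correction term $\tfrac16 \log \lceil \cdot \rceil$.

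The strict inequality needs a little care, and this together with the combinatorial $Z$-bound is where I expect the real work to be rather than in the (routine) Gibbs step. Generically strictness is free: Shannon's lemma is strict unless $p_i = q_i$ for every $i$, and the $Z$-bound is strict ($Z < t^\ast$) whenever the largest occupied size class is not exactly full, i.e.\ whenever $\alpha \neq \tfrac{64^{t^\ast}-1}{63}$. The only delicate situation is the simultaneous boundary case; I would dispatch it by observing that the minimizing code assigns the smallest configurations to the most probable symbols (justified by the same exchange argument), so that $(q_i)$ cannot coincide with $(p_i)$ except for degenerate distributions, forcing the combined inequality to be strict. The main obstacle is thus the clean passage from the per-class constraint $a_j \le 64^{\,j-1}$ to the global estimate $Z \le t^\ast$, making the exchange/greedy argument rigorous and verifying that the minimum over $\gamma$ is attained so that strictness is preserved for $K^*$ itself.
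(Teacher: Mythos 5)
Your proof is correct, and its skeleton is the same as the paper's: Lemma~\ref{L301} as a Kraft-type capacity bound ($a_j \le 2^{6(j-1)} = 64^{j-1}$ distinct initial configurations per size class), Gibbs' inequality applied to a geometric surrogate distribution, and the geometric-series identity $\sum_{k=1}^{t}64^{k-1} = \frac{64^{t}-1}{63}$ producing the threshold $t^* = \lceil \log_{64}(63\alpha+1)\rceil$, which is what the paper's (notationally sloppy) $\lceil \log\frac{63\alpha+1}{6}\rceil$ means. The genuine difference is the normalization, and yours is the more robust choice. The paper sets $q_i = 2^{-6(k_i-1)}/k_\alpha$ with $k_\alpha$ the largest swarm size used, and pins down $k_\alpha = t^*$ via the two-sided estimate $\sum_{k=1}^{k_\alpha-1}2^{6(k-1)} < \alpha \le \sum_{k=1}^{k_\alpha}2^{6(k-1)}$; but the left inequality is false for an arbitrary code (a code may use needlessly large swarms, making $k_\alpha > t^*$ and the resulting bound weaker than claimed). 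It is valid only for the greedily packed code of the relaxed problem in which every size class is assumed to have capacity exactly $2^{6(k-1)}$ --- a restriction the paper sets up informally in the paragraph before the theorem but never re-justifies inside the proof. Your normalization by the actual Kraft sum $Z = \sum_i 64^{-(k_i-1)}$, together with the exchange/greedy argument giving $Z \le t^*$ for \emph{every} code, removes any dependence on the structure of the optimizer: you obtain the bound code-by-code and then pass to the minimum, which is literally what the theorem asserts. The cost is the short exchange argument (which is needed --- the per-class bound ``each class contributes at most $1$'' alone does not suffice, since more than $t^*$ classes may be occupied --- and which you state correctly); the payoff is a proof quantified over all codes. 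One shared blemish is strictness: the paper simply writes Shannon's lemma with ``$<$'', which fails if $p$ coincides with $q$, while you at least isolate the doubly-degenerate boundary case ($Z = t^*$ and $p = q$), though your dismissal of it via ``degenerate distributions'' is hand-wavy. A clean patch available to your formulation is to note that the count in Lemma~\ref{L301} is itself strict, so no size class of an actual code on $G_8(m,n)$ can be filled to capacity $64^{j-1}$, whence $Z < t^*$ and the final inequality is strict; this is worth making explicit.
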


\begin{proof}
By assumption $k_i \leq k_{i+1}$ holds for $i = 1, 2, \ldots , \alpha - 1$.
We first calculate $k_{\alpha}$.
By assumption $k_{\alpha} = \max_i k_i$ holds.
Then $\sum_{k = 1}^{k_{\alpha}-1} 2^{6(k-1)} < \alpha \leq \sum_{k = 1}^{k_{\alpha}} 2^{6(k-1)}$.
By a simple calculation,
we have $k_{\alpha} = \left\lceil \log \frac{63 \alpha+1}{6} \right\rceil$.

Let $q_i = 2^{-6(k_i - 1)}/k_{\alpha}$.
Then, $q_i > 0$ for all $i = 1, 2,\ldots, \alpha$ and 
\[
\sum_{i = 1}^{\alpha} q_i = \sum_{k = 1}^{k_{\alpha}} \frac{2^{-6(k-1)}}{k_{\alpha}} 2^{6(k-1)} \leq 1.
\]
By Shannon's lemma, we have 
\begin{eqnarray*}
H(S) = - \sum_{i=1}^{\alpha} p_i \log p_i 
&<& - \sum_{i=1}^{\alpha} p_i \log q_i \\ 
&=& - \sum_{i=1}^{\alpha} p_i \log \frac{2^{-6(k_i-1)}}{k_{\alpha}} \\ 
&=& \sum_{i=1}^{\alpha} p_i \left(6(k_i-1) + \log k_{\alpha} \right) \\ 
&=& 6 \sum_{i=1}^{\alpha} p_i k_i -6 + \log k_{\alpha} \\ 
&=& 6 K^* -6 + \log k_{\alpha}. 
\end{eqnarray*}

Thus 
\begin{eqnarray*}
K^* &>& \frac{1}{6} H(S) +1 - \frac{1}{6} \log k_{\alpha} \\ 
&>& \frac{1}{6} H(S) + 1 - \frac{1}{6} \log \left\lceil \log \frac{63 \alpha+1}{6} \right\rceil.
\end{eqnarray*}
\end{proof}

Note that $\log \log \alpha > \frac{1}{6} \log \left\lceil \log \frac{63 \alpha+1}{6} \right\rceil - 1$, 
we have
\[
K^* > \frac{1}{6}(H(S) - \log \log \alpha).
\]

A bound of $K^*$ for any code on $G_8(m,2)$ can be derived
by using Lemma~\ref{L302}, by a similar discussion.

\begin{corollary}
\label{C401}
Suppose that $m > k$.
Then for any code for $S$ on $G_8(m,2)$,
\begin{equation*}
 K^* > 0.78 H(S) - 0.79 \log \left( 1 + 0.79 \log \alpha \right). 
\end{equation*}
\end{corollary}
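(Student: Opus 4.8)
The plan is to replicate the argument of Theorem~\ref{T401} in its logical skeleton, substituting the sharper enumeration bound of Lemma~\ref{L302} for the crude bound $2^{6(k-1)}$ used there. Writing $\beta = 1+\sqrt{2}$, Lemma~\ref{L302} tells us that at most $\beta^{k}$ symbols can be assigned configurations of size $k$ on $G_8(m,2)$. I would retain the normalizing assumptions $p_i \leq p_{i+1}$ and $k_i \leq k_{i+1}$, and define the auxiliary weights $q_i = \beta^{-k_i}/k_\alpha$ in direct analogy with the $q_i = 2^{-6(k_i-1)}/k_\alpha$ of the theorem.

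The first step is to verify the Kraft-type inequality $\sum_{i=1}^{\alpha} q_i \leq 1$. Grouping the symbols by their assigned size $k$ and using that at most $\beta^{k}$ of them can share a given size, the sum telescopes to $\sum_{k=1}^{k_\alpha} \beta^{k}\cdot \beta^{-k}/k_\alpha = 1$, exactly as before. Feeding $q_i$ into Shannon's lemma then gives $H(S) < -\sum_i p_i \log q_i = (\log\beta)\sum_i p_i k_i + \log k_\alpha = (\log\beta)K^* + \log k_\alpha$, and rearranging yields $K^* > \frac{1}{\log\beta}\bigl(H(S) - \log k_\alpha\bigr)$. Since $1/\log_2(1+\sqrt{2}) \approx 0.786$, the leading coefficient lands on the claimed $0.78$ after rounding it down, which is legitimate because $H(S)\geq 0$.

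The remaining step is to control $k_\alpha$ from above so that the residual term matches the stated $\log(1+0.79\log\alpha)$. Here I would use that $k_\alpha$ is the smallest size at which the cumulative count of available configurations first reaches $\alpha$, so that $\alpha > \sum_{k=1}^{k_\alpha-1}\beta^{k} \geq \beta^{\,k_\alpha-1}$, whence $k_\alpha < 1 + \frac{\log\alpha}{\log\beta} < 1 + 0.79\log\alpha$. Substituting $\log k_\alpha < \log(1+0.79\log\alpha)$ and rounding the coefficient $0.786$ up to $0.79$ on the (nonnegative) logarithmic correction completes the bound.

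The only delicate part is bookkeeping of the decimal approximations: I must round $1/\log\beta$ \emph{down} to $0.78$ on the $H(S)$ term but \emph{up} to $0.79$ both in front of and inside the logarithmic correction, and check at each replacement that the inequality still points toward a valid lower bound, using $H(S)\geq 0$ and $\log(1+0.79\log\alpha)\geq 0$ for $\alpha\geq 1$. No genuinely new idea beyond Theorem~\ref{T401} is required; the one mild subtlety is that the geometric series $\sum_k \beta^{k}$ does not collapse as cleanly as the powers of $2^{6}$, so in bounding $k_\alpha$ I would simply discard all but the last term rather than evaluating the sum exactly.
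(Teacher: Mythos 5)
Your proposal is correct and is precisely the ``similar discussion'' the paper invokes: it transplants the proof of Theorem~\ref{T401} onto $G_8(m,2)$, replacing the count $2^{6(k-1)}$ by the $(1+\sqrt{2})^k$ bound of Lemma~\ref{L302}, with the same greedy bound on $k_\alpha$, the same Kraft-type weights $q_i$, and the same application of Shannon's lemma, and your bookkeeping of the constants $0.78$ and $0.79$ (rounding $1/\log(1+\sqrt{2}) \approx 0.786$ in the safe direction in each occurrence, using $H(S) \geq 0$ and $\log k_\alpha \geq 0$) is sound. The only cosmetic gap is the degenerate case $k_\alpha = 1$, where the step $\sum_{k=1}^{k_\alpha-1}\beta^k \geq \beta^{k_\alpha-1}$ compares an empty sum with $1$; there $\log k_\alpha = 0$ and the claimed bound holds trivially.
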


Since $1 + 0.79 \log \alpha < \log \alpha$ when $\alpha \geq 2^5$,
$K^* > 0.78 H(S) - 0.79 \log \log \alpha$,
which may be approximated by $\frac{4}{5}(H(S) - \log \log \alpha)$
when $\alpha \geq 2^5$.
Note that regardless of $\alpha$, $H(S)$ can be 0, when $p_1 = 1$,
which means that these lower bounds can be negative and become meaningless.
On the other hand, when $p_i = 1/\alpha$ for all $i = 1, 2, \ldots , \alpha$,
$H(S)$ take the maximum value $\log \alpha$.
Thus $K^*$ on $TG_8(m,2)$ is roughly bounded from below by $0.78 \log \alpha$, 
when $\log \log \alpha$ is negligible.

\subsection{Upper bound of $K^*$ based on $ALG_1$} 

To derive an upper bound of $K^*$,
we construct a code $\gamma_{ALG_1}$ for $S$ based on $ALG_1$.
Although there are $2^{k-14}$ size $k$ configurations assignable to symbols for any $k > 14$,
there are no size $k$ configurations assignable to symbols for $k \leq 14$.
For each $k \leq 14$ however, we can easily construct a code with code size 1 
by using a trivial locomotion algorithm illustrated in Figure~\ref{figure:loco}.

\begin{figure}[tbp] 
 \centering 
 \includegraphics[width=11cm]{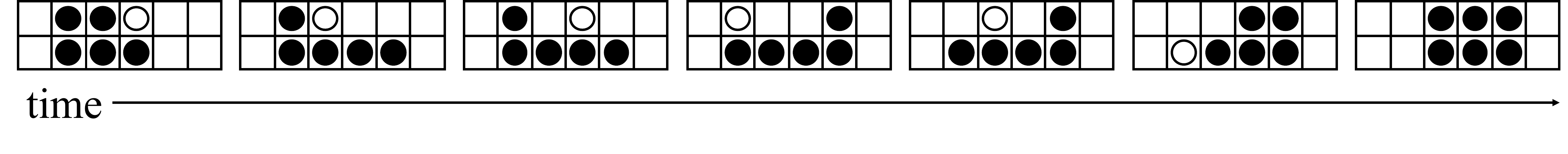}
 \caption{Locomotion of a small number of robots on $G_8(m,2)$}
 \label{figure:loco}
\end{figure}

Thus we use the configurations defined by $ALG_1$ as well as one configuration defined above 
for each $k \leq 14$.
Let $X$ be the set of all configurations assignable.
Code $\gamma_{ALG_1}$ is defined as follows:
We assign a configuration of size $k_i$ in $X$ to a symbol $s_i$ 
in such a way that $k_i \leq k_{i+1}$ holds for $i = 1, 2, \ldots , \alpha - 1$.
By definition, $k_{\alpha} \leq \alpha$.
Let $K_{ALG_1}$ denote the average code size of $\gamma_{ALG_1}$.

\begin{theorem}
\label{T402}
When $m > \alpha$,
$K_{ALG_1} < H(S) + 15$ holds.
\end{theorem}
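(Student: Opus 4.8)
The plan is to mirror the upper-bound half of Shannon's noiseless coding theorem, adapted to the fact that the number of assignable configurations of size $k$ is $2^{k-14}$ (for $k \geq 15$) rather than $2^k$. First I would record the exact supply of codewords: from $ALG_1$ together with Lemma~\ref{L303} there are $2^{k-14}$ assignable configurations of each size $k \geq 15$, and by the locomotion construction one of each size $k \leq 14$; write $a_k$ for this count, so $a_k = 1$ for $k \leq 14$ and $a_k = 2^{k-14}$ for $k \geq 15$. The role of the hypothesis $m > \alpha$ is only to guarantee that every size actually used, which will be at most $k_\alpha \leq \alpha$, fits on $G_8(m,2)$.

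Next I would exhibit one explicit feasible assignment of sizes to symbols, bound its average size, and then argue that $\gamma_{ALG_1}$ is no worse. Assume without loss of generality $p_1 \geq p_2 \geq \cdots \geq p_\alpha$. Set $\ell_i = \lceil \log(1/p_i) \rceil$ and take the target size to be $k_i = \ell_i + 14$. The key feasibility check is a Kraft-type counting: for each $\ell \geq 1$ every symbol with $\ell_i = \ell$ satisfies $p_i \geq 2^{-\ell}$, so since $\sum_i p_i \leq 1$ there are at most $2^\ell = a_{\ell+14}$ such symbols; hence there are enough distinct configurations of size $\ell+14$ to carry all symbols demanding that size (the single symbol with $p_i = 1$, if any, takes size $14$, and no size below $14$ is ever demanded). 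Thus these target sizes are realizable by distinct configurations in $X$.

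For this target assignment the average size is
\[
\sum_{i=1}^{\alpha} p_i k_i = 14 + \sum_{i=1}^{\alpha} p_i \left\lceil \log \tfrac{1}{p_i} \right\rceil < 14 + \sum_{i=1}^{\alpha} p_i\left( \log \tfrac{1}{p_i} + 1 \right) = H(S) + 15,
\]
using $\lceil x \rceil < x + 1$. Finally I would observe that $\gamma_{ALG_1}$, which pairs the $\alpha$ smallest available configurations with the symbols in order of decreasing probability (so that $k_i \leq k_{i+1}$), minimizes the average size among all ways to assign the $\alpha$ symbols distinct configurations: using the $\alpha$ smallest available sizes is optimal, since swapping a used large configuration for an unused smaller one never increases the sum, and the rearrangement inequality shows the sorted pairing is optimal for any fixed multiset of sizes. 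Since the target assignment above is one admissible assignment, $K_{ALG_1}$ is at most its average, giving $K_{ALG_1} < H(S) + 15$.

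The step I expect to be the main obstacle is the feasibility check: one must verify that the offset supply $2^{k-14}$ of size-$k$ configurations is never exceeded by the Shannon-style demand, i.e.\ that the shift-by-$14$ still yields a valid Kraft-type inequality, and dispose of the boundary sizes $k \leq 14$ cleanly. The remaining ingredients, namely the entropy estimate via $\lceil x\rceil < x+1$ and the optimality of the sorted greedy assignment, are routine.
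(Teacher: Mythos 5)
Your proof is correct and takes essentially the same route as the paper's: the paper invokes Shannon's noiseless coding theorem as a black box to get prefix-free lengths $\ell_i$ with $\sum_i p_i \ell_i < H(S)+1$ and then asserts the pointwise bound $k_i - 14 \leq \ell_i$ for the greedy packing into the tower (one codeword per level $k \leq 14$, $2^{k-14}$ per level $k \geq 15$), which is precisely the Kraft-type counting you carry out explicitly with $\ell_i = \lceil \log (1/p_i) \rceil$. If anything, your write-up is more complete than the paper's, since the paper's step ``by the definition of $\gamma_{ALG_1}$, $k_i - 14 \leq \ell_i$'' tacitly relies on both the per-level supply count and the optimality of the sorted greedy assignment that you prove via the exchange/rearrangement argument.
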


\begin{proof}
Consider the ``tower'' of codewords of $\gamma_{ALG_1}$.
Each of the first $14$ levels contains a single codeword, 
and the $\ell$th layer contains $2^{\ell-14}$ codewords for $\ell \geq 15$.

Shannon's noiseless coding theorem guarantees the existence 
of a prefix-free code whose average code length is smaller than $H(S) + 1$.
Let $\ell_i$ be the length of the codeword of $s_i$ in such a prefix-free code. 
By the definition of $\gamma_{ALG_1}$, $k_i - 14 \leq \ell_i$. 
Thus
\begin{equation*}
K_{ALG_1} \leq \sum_{i=1}^{\alpha} p_i (\ell_i + 14) < H(S)+15. 
\end{equation*}
\end{proof}

Thus the performance of $\gamma_{ALG_1}$ in terms of the average code size is sufficiently 
good on $G_8(m,2)$, 
since, very roughly, $K_{\gamma}$ of any code $\gamma$ on $G_8(m,2)$ is bounded from below 
by $\frac{4}{5}(H(S) - \log \log \alpha)$.

Code $\gamma_{ALG_1}$ can be used not only on $G_8(m,2)$ but also on $G_8(m,n)$ for $n > 2$.
On $G_8(m,n)$, its performance may not be good, however, 
since our lower bound of $K^*$ on $G_8(m,n)$ is $\frac{1}{6}(H(S) - \log \log \alpha)$,
and there is still a large gap from $K_{ALG_1}$.

\subsection{Average transmission delay}

We next analyze the average transmission delay.
For any code $\gamma$ on $G_8(m,2)$, let $D_{\gamma} = \sum_{i=1}^{\alpha} d_i p_i$.
We are interested in the optimal average transmission delay $D^* = \min_{\gamma} D_{\gamma}$.
First we derive a lower bound of $D^*$.
Consider any code $\gamma$ on $G_8(m,n)$, where $m > 2 \alpha$.
We may assume without loss of generality that $\alpha \geq k_i$ for any $i = 1, 2, \ldots , \alpha$.
By Theorem~\ref{T204}, $d_i \geq k_i (m - 2k_i + 1)$.
Since $\alpha \geq k_i$,
\[
D_{\gamma} \geq (m+1) K_{\gamma} - 2 \sum_{i = 1}^{\alpha} k_i^2 p_i \geq (m + 1 - 2\alpha)K_{\gamma}.
\]
Since $K_{\gamma} \geq K^*$, we have:

\begin{observation}
\label{O401}
$D^* \geq (m + 1 - 2 \alpha) K^*$.
\end{observation}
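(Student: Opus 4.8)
The plan is to bound the weighted delay $D_\gamma$ of an arbitrary realizable code $\gamma$ from below by a positive multiple of its average swarm size $K_\gamma$, and then pass to the optimum through $K_\gamma \geq K^*$. The per-symbol ingredient is Theorem~\ref{T204}. Since $u_S = (0,0)$ and $u_R = (m-1,0)$ we have $dist_G(u_S,u_R) = m-1$, so the codeword of $s_i$, which uses a swarm of size $k_i$, has transmission delay obeying $d_i \geq k_i\bigl(dist_G(u_S,u_R) - 2(k_i-1)\bigr) = k_i(m+1-2k_i)$. The hypotheses of Theorem~\ref{T204} (namely $dist_G(u_S,u_R) \geq 2k_i-1$ and disjointness of initial and terminal configurations) are met for every $k_i \leq \alpha$ because $m > 2\alpha$ forces $m-1 > 2\alpha - 1 \geq 2k_i - 1$.

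Next I would weight this per-symbol inequality by $p_i$ and sum over $i$, which gives
\[
D_\gamma = \sum_{i=1}^{\alpha} d_i p_i \geq (m+1)\sum_{i=1}^{\alpha} k_i p_i - 2\sum_{i=1}^{\alpha} k_i^2 p_i = (m+1)K_\gamma - 2\sum_{i=1}^{\alpha} k_i^2 p_i.
\]
To turn the quadratic term into a term proportional to $K_\gamma$, I would use $k_i^2 = k_i \cdot k_i \leq \alpha k_i$, valid once $k_i \leq \alpha$; this yields $\sum_i k_i^2 p_i \leq \alpha \sum_i k_i p_i = \alpha K_\gamma$ and hence $D_\gamma \geq (m+1-2\alpha)K_\gamma$. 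Finally, since $m > 2\alpha$ makes the coefficient $m+1-2\alpha$ strictly positive, I may replace $K_\gamma$ by the no-larger quantity $K^*$ without reversing the inequality, obtaining $D_\gamma \geq (m+1-2\alpha)K^*$; taking the minimum over $\gamma$ delivers $D^* \geq (m+1-2\alpha)K^*$.

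The step I expect to be the main obstacle is the reduction ``without loss of generality $\alpha \geq k_i$'' that underlies the linearization $k_i^2 \leq \alpha k_i$. A priori the code realizing $D^*$ could employ a swarm of size larger than $\alpha$ for some symbol, in which case that bound fails. The justification is that one never needs more than $\alpha$ distinct codewords, and connected configurations of every size $1,2,\ldots,\alpha$ are available (their count grows with the size, as quantified in Lemma~\ref{L302}), so there is always a code assigning to each $s_i$ a configuration of size $k_i \leq \alpha$. Moreover, in the regime $k_i \ll m$ the achievable delay increases with swarm size (compare $d_{ALG_2} = k(m-k)$, which is increasing in $k$ for $k < m/2$), so replacing an oversized swarm by one of size at most $\alpha$ does not increase the delay. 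Consequently the minimum $D^*$ is attained by a code with $k_i \leq \alpha$, and the chain above applies to that minimizer. Throughout, one must keep $m+1-2\alpha > 0$ so that the monotonicity used in the last step is legitimate, which is exactly what the standing hypothesis $m > 2\alpha$ guarantees.
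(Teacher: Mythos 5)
Your proposal is correct and follows essentially the same route as the paper: apply Theorem~\ref{T204} with $dist_G(u_S,u_R)=m-1$ to get $d_i \geq k_i(m+1-2k_i)$ per symbol, sum with weights $p_i$, linearize the quadratic term via $k_i \leq \alpha$ to obtain $D_\gamma \geq (m+1-2\alpha)K_\gamma$, and finish with $K_\gamma \geq K^*$ and positivity of $m+1-2\alpha$. The paper simply asserts the reduction to $k_i \leq \alpha$ as a without-loss-of-generality assumption, so your extra discussion of that point only makes explicit what the paper leaves implicit.
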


To derive an upper bound of $D^*$, 
Let $D_{ALG_1}$ be the average delay of $\gamma_{ALG_1}$.
By Lemma~\ref{L303}, $d_i \leq (10k_i-123.5)(m-k_i)$.
Since $\alpha \geq k_i$ holds for all $i = 1, 2, \ldots , \alpha$,
we have:

\begin{observation}
\label{O402}
$D_{ALG_1} \leq 10 m K_{ALG_1}$.
\end{observation}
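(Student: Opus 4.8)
The plan is to bound the average delay $D_{ALG_1} = \sum_{i=1}^{\alpha} p_i d_i$ termwise and then discard the resulting nonpositive slack. First I would invoke Lemma~\ref{L303}, which supplies the per-symbol estimate $d_i \le (10 k_i - 123.5)(m - k_i)$; the factor $m - k_i$ is positive because the standing hypothesis $m > \alpha$ together with $k_i \le \alpha$ forces $k_i < m$, so each transmission indeed terminates. Substituting this bound into the definition of $D_{ALG_1}$ and expanding the product, then collecting the coefficient of $m$, gives
\begin{equation*}
D_{ALG_1} \le 10 m \sum_{i=1}^{\alpha} p_i k_i - 10 \sum_{i=1}^{\alpha} p_i k_i^2 - 123.5\, m + 123.5 \sum_{i=1}^{\alpha} p_i k_i .
\end{equation*}
The leading term is exactly $10 m K_{ALG_1}$, so the task reduces to showing that the three remaining terms sum to a nonpositive quantity.

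Next I would estimate this slack, namely $-10 \sum_{i} p_i k_i^2 + 123.5\,(K_{ALG_1} - m)$. The first piece is $\le 0$ since $\sum_i p_i k_i^2 \ge 0$. For the second piece I would use $m > \alpha$ and the previously recorded assumption $k_i \le \alpha$ for every $i$: these yield $K_{ALG_1} = \sum_i p_i k_i \le \alpha \sum_i p_i = \alpha < m$, hence $K_{ALG_1} - m < 0$. Consequently the entire slack is strictly negative, and the displayed inequality collapses to $D_{ALG_1} \le 10 m K_{ALG_1}$, as claimed.

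The only real obstacle is a sign issue hidden in the mixed construction of $\gamma_{ALG_1}$: the affine factor $10 k_i - 123.5$ is negative for the small codewords ($k_i \le 12$), and these are produced by the locomotion scheme of Figure~\ref{figure:loco} rather than by $ALG_1$ proper, so Lemma~\ref{L303} does not literally govern them. To keep the argument uniform I would instead establish the cruder bound $d_i \le 10 m k_i$ for \emph{every} $i$: for $k_i > 14$ it follows from $(10 k_i - 123.5)(m - k_i) \le 10 k_i m$, and for $k_i \le 14$ it follows because the locomotion of a size-$k_i$ swarm across a distance at most $m$ costs delay at most $k_i m \le 10 m k_i$. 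Summing $d_i \le 10 m k_i$ against $p_i$ then gives $D_{ALG_1} \le 10 m \sum_i p_i k_i = 10 m K_{ALG_1}$ directly, sidestepping the negative-coefficient difficulty. Either route closes the proof; the termwise expansion makes transparent how loose the estimate is, whereas the uniform bound $d_i \le 10 m k_i$ is the cleaner way to accommodate the two-part code.
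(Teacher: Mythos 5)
Your proposal is correct and follows essentially the paper's own route: the paper's entire argument is the remark preceding the observation, namely invoking Lemma~\ref{L303}'s bound $d_i \leq (10k_i-123.5)(m-k_i)$ together with $k_i \leq \alpha < m$ and summing against the $p_i$, which is exactly your uniform bound $d_i \leq 10 m k_i$. Your further point --- that Lemma~\ref{L303} does not literally cover the symbols with $k_i \leq 14$, which $\gamma_{ALG_1}$ realizes via the locomotion scheme of Figure~\ref{figure:loco} rather than by $ALG_1$ proper, and for which the affine factor $10k_i - 123.5$ is even negative when $k_i \leq 12$ --- identifies a real imprecision that the paper silently elides, and your crude bound $d_i \leq 10 m k_i$ (valid for the locomotion codewords since their delay is about $k_i m$) is the right patch.
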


Thus $\gamma_{ALG_1}$ may not be a good choice from the view of 
the average transmission delay,
even considering the fact that $K_{ALG_1} \leq H(S) + 15$.
The average number of steps necessary to move one column is
approximated from above by $10 H(S)$ when $m \gg \alpha$.

Since algorithm $ALG_2$ has a smaller 
transmission delay,
we analyze a code $\gamma_{ALG_2}$ based on $ALG_2$.
Since $ALG_2$ is defined only for $k \geq 4$,
like $\gamma_{ALG_1}$, consider set $X$ that contains standard configurations
defined in $ALG_2$ for all $k \geq 4$,
as well as a single configuration of size $i$ for each of $i = 1, 2, 3$.
Like $\gamma_{ALG_1}$, $\gamma_{ALG_2}$ is defined as follows:
We assign a size $k_i$ configuration in $X$ to a symbol $s_i$ 
in such a way that $k_i \leq k_{i+1}$ holds for $i = 1, 2, \ldots , \alpha - 1$.
Let $K_{ALG_2}$ denote the average code size of $\gamma_{ALG_2}$.

\begin{theorem}
\label{T403}
Suppose that $m > \alpha > 1$.
Then $K_{ALG_2} < 2H(S)$. 
\end{theorem}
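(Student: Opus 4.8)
The plan is to mirror the proof of Theorem~\ref{T402}: bound the assignment defining $\gamma_{ALG_2}$ by comparing it against an assignment derived from an optimal binary prefix-free code, using that $\gamma_{ALG_2}$ gives the smallest configurations to the most probable symbols. The one structural fact I need about $ALG_2$ is that a configuration of size $k$ stores $\lfloor k/2 \rfloor$ bits; equivalently, for each $k \geq 4$ there are exactly $2^{\lfloor k/2\rfloor}$ assignable configurations of size $k$, together with one configuration each of sizes $1,2,3$. Thus, up to the factor of two coming from ``two robots per bit,'' $\gamma_{ALG_2}$ plays the role that $\gamma_{ALG_1}$ played in Theorem~\ref{T402}.

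First I would invoke Shannon's noiseless coding theorem to fix an optimal binary prefix-free code for $S$ with codeword lengths $\ell_1,\ldots,\ell_\alpha$ satisfying $\sum_{i=1}^{\alpha} p_i \ell_i < H(S)+1$. Since a length-$\ell_i$ codeword stores $\ell_i$ bits, and there are $2^{\ell_i}$ configurations of size $2\ell_i$, which is at least the number of length-$\ell_i$ codewords by Kraft's inequality, I can map the codeword of $s_i$ to a distinct configuration of size $2\ell_i$. This is a valid assignment, and because $\gamma_{ALG_2}$ assigns the smallest configurations to the largest probabilities, and hence minimizes the average size over all valid assignments, I obtain $K_{ALG_2} \le 2\sum_i p_i \ell_i < 2(H(S)+1)$.

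To sharpen $2(H(S)+1)$ to the stated $2H(S)$, I would argue directly from the configuration counts, exploiting the surplus of available configurations. Writing $M(k)$ for the number of assignable configurations of size at most $k$, a short computation gives $M(2t) = 3\cdot 2^{t} - 5$, so the number of configurations per ``bit level'' is essentially three times that of a plain binary tree, since sizes $2t$ and $2t+1$ both contribute on top of the three base configurations. Consequently the $r$-th smallest configuration, assigned to the $r$-th most probable symbol, has size $k_r \le 2\log r + c$ with a constant $c$ that the surplus pushes below the naive value. Sorting the probabilities in decreasing order gives $p_{(r)} \le 1/r$, hence $\log r \le \log(1/p_{(r)})$, and Shannon's lemma yields $\sum_{r} p_{(r)} \log r \le H(S)$; combining these bounds collapses the additive overhead and gives $K_{ALG_2} < 2H(S)$.

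The main obstacle is the bookkeeping that makes the configuration-to-codeword correspondence exact: verifying that the counts $2^{\lfloor k/2\rfloor}$ leave room for an optimal prefix-free code, handling the boundary $k \ge 4$ and the three exceptional small configurations cleanly, and, most delicately, tracking the additive constants so that the surplus in $M(k)$, the factor $3$ rather than $1$, precisely absorbs the $+2$ overhead from Shannon's theorem and the size-$1$ assignment of the most probable symbol. I expect the inequality to be comfortable once $H(S)$ is not vanishingly small, with the factor-of-two ``robots per bit'' remaining the dominant term.
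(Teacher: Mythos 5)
Your proposal does not close the theorem: the step you defer as ``bookkeeping'' is precisely the content of the paper's proof. The part you actually carry out (mapping a Shannon prefix-free code with lengths $\ell_i$, $\sum_i p_i \ell_i < H(S)+1$, to configurations of size $2\ell_i$) only yields $K_{ALG_2} < 2(H(S)+1)$, and the $+2$ is a genuine additive loss of that route; it cannot be rearranged away. To reach $2H(S)$ one must prove the per-symbol bound $k_r < 2\log r$ with \emph{no} additive constant, and then sum against $p_r \le 1/r$. Your sketch asserts $k_r \le 2\log r + c$ and that ``the surplus collapses the additive overhead,'' but never verifies $c \le 0$ -- and that verification is exactly what the paper's proof consists of. The paper splits the tower of available configurations into $T_1$ (sizes $1,2,3$), $T_2$ (even sizes $\ge 4$) and $T_3$ (odd sizes $\ge 5$); for a symbol $s_i$ at level $h_i$ of $T_2$ it uses the fact that all smaller configurations are already occupied to get $3 + 2(1+2+\cdots+2^{h_i-1}) < i$, hence $h_i < \log i - 1$ and $k_i = 2h_i+2 < 2\log i$, and argues similarly in $T_3$ that $k_i = 2h_i+3 < 2\log i$; then $K_{ALG_2} = \sum_i p_i k_i < 2\sum_i p_i \log i \le 2\sum_i p_i(-\log p_i) = 2H(S)$. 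So your third paragraph is the paper's argument in outline, but with its one decisive step left unproved.

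To your credit, your counting fact $M(2t) = 3\cdot 2^t - 5$ is correct and is strong enough to finish for all but finitely many indices: from $r > M(k_r - 1)$ one gets, for even $k_r = 2t$, that $2^{t+1} < r+5$ and hence $k_r < 2\log\bigl((r+5)/2\bigr) < 2\log r$ once $r \ge 5$, and the odd case works similarly using the factor $3$. What remains -- for you, and in fact for the paper as well -- are the small indices: $k_1 = 1 > 2\log 1$, $k_2 = 2 = 2\log 2$, $k_4 = 4 = 2\log 4$, so the per-symbol inequality is false or non-strict at $r = 1,2,4$, and indeed the stated bound $K_{ALG_2} < 2H(S)$ fails outright for strongly skewed sources (with $\alpha = 2$ and $p_1$ near $1$, $K_{ALG_2} \ge 1$ while $2H(S)$ is close to $0$). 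Your closing hedge that the inequality is ``comfortable once $H(S)$ is not vanishingly small'' correctly identifies this weakness, which the paper's proof silently skips by treating only symbols above the third level. The honest status of your proposal is therefore: a complete proof of the weaker bound $2(H(S)+1)$, plus a correct plan for $2H(S)$ whose key step -- eliminating the additive constant, including at the small indices -- is exactly what the paper's $T_1/T_2/T_3$ decomposition is there to supply.
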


\begin{proof}
Recall that $p_i \geq p_{i+1}$ for $i = 1, 2, \ldots , \alpha - 1$.
Then $p_i \leq 1/i$, and hence $-\log p_i \geq \log i$. 

Consider the ``tower'' of codewords of $\gamma_{ALG_2}$.
Each of the first $3$ levels contains a single codeword, 
and the $\ell$th layer contains $2^{\lfloor \frac{k}{2} \rfloor}$ 
codewords for $k \geq 4$.
Code $\gamma_{ALG_2}$ packs the symbols in its order $s_1, s_2, \ldots, s_{\alpha}$
to codewords from the first level to the higher level in its order. 

We then divide the tower into three small towers, $T_1$, $T_2$, and $T_3$, 
where $T_1$ consists of the first 3 levels, 
$T_2$ consists of even levels greater than 3, 
and $T_3$ consists of odd levels greater than 4. 
Hence, $h$th level of $T_2$ (and $T_3$ also) corresponds to 
a binary tree in the sense that it contains $2^{h-1}$ codewords. 

First consider a codeword for $s_i$ in $T_2$. 
Let $h_i$ be the level that contains a configuration assigned to $s_i$ in $T_2$. 
The total number of symbols assigned to the level lower than 
$(h_i-1)$ in $T_1$, $T_2$, and $T_3$ is smaller than $i$. 
Thus we have $3 + 2(1+2+\cdots + 2^{h_i-1}) < i$,
which implies that $h_i < \log i - 1$.
Since the size of $k_i$ is $2h_i + 2$,
$k_i < 2 \log i$.
We then consider a codeword $s_i$ in $T_3$. 
Let $h_i$ be the level of $s_i$ in $T_3$. 
In the same way, $3 + 2(1+2+\cdots + 2^{m_i-1}) + 2^{m_i} < i$,
which implies $h_i < \log i - \log 3$.
Since the size of $k_i$ is $2h_i + 3$,
$k_i < 2 \log i + 3 - 2 \log 3 < 2 \log i$.
Thus, we have 
\begin{eqnarray*}
\sum_{i=1}^{\alpha} p_i k_i 
&<& \sum_{i=1}^{\alpha} 2 p_i \log i  \\ 
&=& 2 \sum_{i=1}^{\alpha} p_i \log i \\ 
&\leq& 2 \sum_{i=1}^{\alpha} p_i (- \log p_i) \\ 
&=& 2 H(S). 
\end{eqnarray*}
\end{proof}

Thus the upper bound of $K_{ALG_2}$ obtained by the theorem is roughly 
twice as much as that of $K_{ALG_1}$.
Let $D_{ALG_2}$ be the average transmission delay of $\gamma_{ALG_2}$.
By Lemma~\ref{L331}, $d_i = k_i (m - k_i)$.
Thus
\begin{observation}
\label{O402}
$D_{ALG_2} \leq m K_{ALG_2}$.
\end{observation}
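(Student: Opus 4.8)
The plan is to bound each individual transmission delay $d_i$ by $m k_i$ and then take the probability-weighted average, exactly paralleling the treatment of $ALG_1$ in the preceding observations. By Lemma~\ref{L331}, a swarm of size $k_i$ running $ALG_2$ reaches a terminal configuration with delay $d_i = k_i(m - k_i)$. First I would record the pointwise estimate: since $k_i \geq 0$, we have $k_i(m - k_i) = m k_i - k_i^2 \leq m k_i$, so $d_i \leq m k_i$.

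Next I would dispose of the auxiliary configurations of size $1, 2, 3$, which are not governed by $ALG_2$ but by the trivial locomotion scheme of Figure~\ref{figure:loco}. I expect the same bound $d_i \leq m k_i$ to hold for these, since such a configuration merely slides rightward, traversing at most $m$ columns at a cost of at most $k_i$ moves per column; a direct count of the locomotion in Figure~\ref{figure:loco} settles this. Once this is verified, the pointwise bound $d_i \leq m k_i$ holds uniformly over all $s_i$.

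Finally I would sum over all symbols weighted by their occurrence probabilities:
\begin{equation*}
D_{ALG_2} = \sum_{i=1}^{\alpha} d_i p_i \leq \sum_{i=1}^{\alpha} m k_i p_i = m \sum_{i=1}^{\alpha} k_i p_i = m K_{ALG_2},
\end{equation*}
which is precisely the claimed inequality.

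The argument is essentially mechanical, and I do not anticipate any genuine obstacle: the exact delay formula from Lemma~\ref{L331} does all the work, and discarding the $-k_i^2$ term is immediate. The only point requiring a moment's care is confirming that the size-$1,2,3$ configurations, whose delay is defined through the locomotion algorithm rather than through Lemma~\ref{L331}, also satisfy $d_i \leq m k_i$; this is the sole place where a small separate check is needed.
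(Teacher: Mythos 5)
Your proof is correct and follows essentially the same route as the paper: the paper likewise takes the exact delay $d_i = k_i(m-k_i)$ from Lemma~\ref{L331}, drops the $-k_i^2$ term to get $d_i \leq m k_i$, and averages with the weights $p_i$. Your separate check that the trivial locomotion configurations of sizes $1,2,3$ also satisfy $d_i \leq m k_i$ is a point the paper passes over in silence, and it is a welcome (and easily verified) addition rather than a departure from the paper's argument.
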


Since $K_{ALG_2} < 2 H(S)$,
we have $\leq D^* \leq 2m H(S)$.
The average number of steps necessary to move one column is
approximated from above by $2 H(S)$ when $m \gg \alpha$.
Thus 
the variable swarm size code by 
$ALG_2$ is 5 times as fast as that by $ALG_1$.

\section{Conclusion}
\label{sec:concl}

We proposed an information transmission scheme by a swarm of 
anonymous oblivious mobile robots on a graph. 
We mainly analyzed the performance of our scheme in terms of 
code size and transmission delay in the 8-gird 
and proposed two algorithms one achieves exponential code size 
with large transmission delay and the other achieves 
optimal transmission delay with small code size. 
We finally extended these algorithms for 
variable swarm size codes. 

There are many open problems related to the proposed scheme. 
First, we could not find any algorithm with optimal code size and 
optimal transmission delay. 
Second, the gap between the upper bound and the lower bound 
of the (expected) swarm size of the fixed swarm size code and 
of the variable swarm size code needs to be closed. 
One approach is a more sophisticated technique to upper bound 
the number of initial configurations and terminating behaviors. 
Third, parallel movement of robots might speed up the transmission 
and makes algorithms simpler. 

One of the most important future directions is robustness. 
We put our basis on Shannon's noiseless coding theorem, and 
the next step is to consider faulty robots. 
We believe Shannon's noisy channel coding theorem help the investigation. 
Another direction is to investigate local algorithms 
that restricts the visibility of the robots to a constant distance.

\bibliographystyle{plain}
\bibliography{papers}

\end{document}